\newcommand{\set}[1]{\{#1\}}
\newcommand{\code}[2]{#1:#2}
\newcommand{\hStatex}[0]{\vspace{4pt}}
\newcommand{\cvar}{\mathit{c}}
\newcommand{\D}{D} 
\newcommand{\dvar}{\mathit{d}}
\newcommand{\ivar}{\mathit{i}} 
\renewcommand{\P}{P} 
\newcommand{\pvar}{\mathit{p}}
\newcommand{\jvar}{\mathit{j}} 
\newcommand{\svar}{\mathit{s}}
\newcommand{\servers}{\mathcal{S}}
\newcommand{\clock}{{\sf clock}}
\newcommand{\clockvar}{\mathit{clock}}
\newcommand{\partition}{\textsc{partition}}
\newcommand{\datacenter}{\textsc{datacenter}}
\newcommand{\replicas}{\textsc{replicas}}
\newcommand{\replica}[2]{\mathit{r}^{\mathit{#1}}_{\mathit{#2}}}
\newcommand{\store}{{\sf store}}
\newcommand{\storevar}{\mathit{store}}
\newcommand{\stvar}{\mathit{st}}
\newcommand{\Store}{\textsc{Store}}
\newcommand{\Q}{Q} 
\newcommand{\quorum}{{\sf quorum}}
\newcommand{\M}{M} 
\newcommand{\mvar}{\mathit{m}}
\newcommand{\sign}[3]{\langle #1 \rangle^{#2}_{#3}}
\newcommand{\gentlerain}{GentleRain}
\newcommand{\byzgentlerain}{Byz-GentleRain}
\newcommand{\byzrcm}{Byz-RCM}
\newcommand{\byzcc}{Byz-CC}
\newcommand{\rel}[1]{\xrightarrow{#1}}
\newcommand{\so}{{\sf so}}
\newcommand{\rf}{{\sf rf}}
\newcommand{\ovar}{\mathit{o}}
\newcommand{\rvar}{\mathit{r}}
\newcommand{\wvar}{\mathit{w}}
\newcommand{\get}{\textsc{get}}
\newcommand{\getreq}{\textsc{get\_req}}
\renewcommand{\put}{\textsc{put}}
\newcommand{\putreq}{\textsc{put\_req}}
\newcommand{\heartbeat}{\textsc{heartbeat}}
\newcommand{\hb}{\textsc{hb}}
\newcommand{\newcgst}{\textsc{new\_cgst}}
\newcommand{\collect}{\textsc{collect}}
\newcommand{\collectack}{\textsc{collect\_ack}}
\newcommand{\propose}{\textsc{propose}}
\newcommand{\prepared}{\textsc{prepared}}
\newcommand{\preparedpred}{{\sf prepared}}
\newcommand{\commit}{\textsc{commit}}
\newcommand{\newview}{\textsc{newview}}
\newcommand{\newleader}{\textsc{newleader}}
\newcommand{\Key}{{\sf Key}}
\newcommand{\kvar}{\mathit{k}}
\newcommand{\Val}{{\sf Val}}
\newcommand{\vvar}{\mathit{v}}
\newcommand{\VVal}{{\sf VVal}}
\newcommand{\vvvar}{\mathit{vv}}
\newcommand{\uvar}{\mathit{u}}
\newcommand{\getack}{\textsc{get\_ack}}
\newcommand{\putack}{\textsc{put\_ack}}
\newcommand{\var}{{\bf var}\;}
\newcommand{\ok}{{ok}}
\newcommand{\pre}{{\bf pre:}\;}
\newcommand{\send}{{\bf send}\;}
\newcommand{\sendto}{\;{\bf to}\;}
\newcommand{\receive}{{\bf receive}\;}
\newcommand{\from}{\;{\bf from}\;}
\newcommand{\wait}{{\bf wait}\;}
\newcommand{\until}{{\bf until}\;}
\newcommand{\client}{{\bf client}\;}
\newcommand{\timepoint}{\sigma}
\newcommand{\dt}{{\sf dt}}
\newcommand{\dtvar}{\mathit{dt}}
\newcommand{\utvar}{\mathit{cl}}
\newcommand{\lst}{{\sf lst}}
\newcommand{\lstvar}{\mathit{lst}}
\newcommand{\gst}{{\sf gst}}
\newcommand{\gstvar}{\mathit{gst}}
\newcommand{\cgst}{{\sf cgst}}
\newcommand{\cgstvar}{\mathit{cgst}}
\newcommand{\certvar}{\mathit{cert}}
\newcommand{\currview}{{\sf curr\_view}}
\newcommand{\preparedview}{{\sf prepared\_view}}
\newcommand{\preparedstore}{{\sf prepared\_store}}
\newcommand{\voted}{{\sf voted}}
\newcommand{\true}{{\sf true}}
\newcommand{\false}{{\sf false}}
\newcommand{\tsvar}{\mathit{ts}}
\newcommand{\valid}{{\sf valid}}
\newcommand{\hash}{{\sf hash}}
\newcommand{\hvar}{\mathit{h}}
\newcommand{\currstore}{{\sf curr\_store}}
\newcommand{\currcgst}{{\sf curr\_cgst}}
\newcommand{\viewvar}{\mathit{view}}
\newcommand{\leader}{{\sf leader}}
\newcommand{\ValidNewLeader}{{\sf ValidNewLeader}}
\newcommand{\C}{C} 
\newcommand{\VV}{{\sf VV}}
\newcommand{\LV}{{\sf LV}}
\newcommand{\replicate}{\textsc{replicate}}
\newcommand{\broadcast}{\textsc{broadcast}}
\newcommand{\bc}{\textsc{bc}}
\newcommand{\cert}{{\sf cert}}
\newcommand{\safecollect}{{\sf safe\_collect}}
\newcommand{\safepropose}{{\sf safe\_propose}}
\newcommand{\inv}{\textsc{Inv}}
\newcommand{\byzrule}{\textsc{Rule}}
\newcommand{\tsof}{\textsl{ts}}
\newcommand{\R}{R}
\newcommand{\W}{W}
\renewcommand{\O}{O}
\begin{document}

\title{Byz-GentleRain: An Efficient Byzantine-tolerant Causal Consistency Protocol}

\titlerunning{Byz-GentleRain}

\author{Kaile Huang\inst{1} \and
Hengfeng Wei\inst{2,3}\thanks{Corresponding Author} \and
Yu Huang\inst{2} \and
Haixiang Li\inst{4} \and
Anqun Pan\inst{4}}

\authorrunning{K. Huang et al.}
\institute{
  Computer Science and Technology, Nanjing University, China
  \email{MG1933024@smail.nju.edu.cn} \\
  \and
  State Key Laboratory for Novel Software Technology,
    Nanjing University, China
  \email{\{hfwei, yuhuang\}@nju.edu.cn} \\
  \and
  Software Institute, Nanjing University, China
  \and
  Tencent Inc., China \\
  \email{\{blueseali, aaronpan\}@tencent.com}}

\maketitle

\begin{abstract}
  Causal consistency is a widely used weak consistency model
  that allows high availability despite network partitions.
  There are plenty of research prototypes
  and industrial deployments of causally consistent distributed systems.
  However, as far as we know, none of them consider Byzantine faults,
  except Byz-RCM proposed by Tseng et al.
  Byz-RCM achieves causal consistency in the client-server model
  with $3f + 1$ servers where up to $f$ servers may suffer Byzantine faults,
  but assumes that clients are non-Byzantine.
  In this work, we present Byz-Gentlerain, the first
  causal consistency protocol which tolerates up to $f$ Byzantine
  servers among $3f + 1$ servers in each partition
  and any number of Byzantine clients.
  Byz-GentleRain is inspired by the stabilization mechanism
  of GentleRain for causal consistency.
  To prevent causal violations due to Byzantine faults,
  Byz-GentleRain relies on PBFT to reach agreement on a sequence of
  global stable times and updates among servers,
  and only updates with timestamps less than or equal to
  such common global stable times are visible to clients.
  We prove that Byz-GentleRain achieves Byz-CC,
  the causal consistency variant in the presence of Byzantine faults.
  We evaluate Byz-GentleRain on Aliyun.
  The preliminary results show that Byz-GentleRain is efficient
  on typical workloads.
  \keywords{Causal consistency \and Byzantine faults \and PBFT \and GentleRain \and Byz-GentleRain.}
\end{abstract}

\section{Introduction} \label{section:intro}

For high availability and low latency
even under network partitions,
distributed systems often partition and replicate data
among multiple nodes~\cite{RDT:POPL14}.
Due to the CAP theorem~\cite{CAP:PODC00}
many distributed systems choose to sacrifice strong consistency
and to implement weak ones.

Causal consistency~\cite{CM:DC95} is one of
the most widely used consistency model in distributed systems.
There are several variants of causal consistency
in the literature~\cite{CM:DC95, CC:PPoPP16, VCC:POPL17, Jiang:SRDS20}.
They all guarantee that
\emph{an update does not become visible until all its causality are visible.}
We informally explain it in the ``Lost-Ring'' example~\cite{Chapar:POPL16}.
Alice first posts that she has lost her ring.
After a while, she posts that she has found it.
Bob sees Alice's two posts, and comments that ``Glad to hear it''.
We say that there is a \emph{read-from} dependency
from Alice's second post to Bob's get operation,
and a \emph{session} dependency
from Bob's get operation to his own comment.
By \emph{transitivity}, Bob's comment causally depend on Alice's second post.
Thus, when Carol, a friend of Alice and Bob, sees Bob's comment,
she should also see Alice's second post.
If she saw only Alice's first post,
she would mistakenly think that Bob is glad to hear that Alice has lost her ring.

There are plenty of research prototypes
and industrial deployments of causally consistent
distributed systems (e.g., COPS~\cite{COPS:SOSP11},
Eiger~\cite{Eiger:NSDI13},
\gentlerain~\cite{GentleRain:SoCC14},
Cure~\cite{Cure:ICDCS16},
MongoDB~\cite{MongoDB:SIGMOD19}, and
\byzrcm~\cite{ByzRCM:NAC19}).
\gentlerain{} uses a stabilization mechanism
to make updates visible while respecting causal consistency.
It timestamps all updates with the physical clock value
of the server where they originate.
Each server $\svar$ periodically computes a global stable time $\gst$,
which is a lower bound on the physical clocks of all servers.
This ensures that no updates with timestamps $\le \gst$ will be generated.
Thus, it is safe to make the updates with timestamps $\le \gst$
at $\svar$ visible to clients.
A get operation with dependency time $\dtvar$ issued to $\svar$
will wait until $\gst \ge \dtvar$
and then obtain the latest version before $\gst$.

However, none of these causal consistency protocols/systems
consider Byzantine faults,
except Byz-RCM (Byzantine Resilient Causal Memory) in~\cite{ByzRCM:NAC19}.
Byz-RCM achieves causal consistency in the client-server model with $3f + 1$ servers
where up to $f$ servers may suffer Byzantine faults,
and any number of clients may crash.
Byz-RCM has also been shown optimal in terms of failure resilience.
However, Byz-RCM did not tolerate Byzantine clients,
and thus it could rely on clients' requests
to identify bogus requests from Byzantine servers~\cite{ByzRCM:NAC19}.

In this work, we present \byzgentlerain,
the first Byzantine-tolerant causal consistency protocol
which tolerates up to $f$ Byzantine servers among $3f + 1$ servers in each partition
\emph{and} any number of Byzantine clients.
It uses PBFT~\cite{PBFT:Thesis00} to reach agreement among servers on
a total order of client requests.
The major challenge Byz-GentleRain faces is to
ensure that the agreement is consistent with the causal order.
To this end, Byz-GentleRain should prevent causality violations
caused by Byzantine clients or servers:
Byzantine clients may violate the session order
by fooling some servers that a request happened before another
that was issued earlier.
Byzantine servers may forge causal dependencies
by attaching arbitrary metadata for causality tracking
to the forward messages.
To migrate the potential damages of Byzantine servers,
we let clients assign totally ordered timestamps to updates in \byzgentlerain.
Utilizing the digital signatures mechanism,
Byzantine servers cannot forge causal dependencies.

\begin{figure}[t]
  \centering
  \includegraphics[width = 0.70\textwidth]{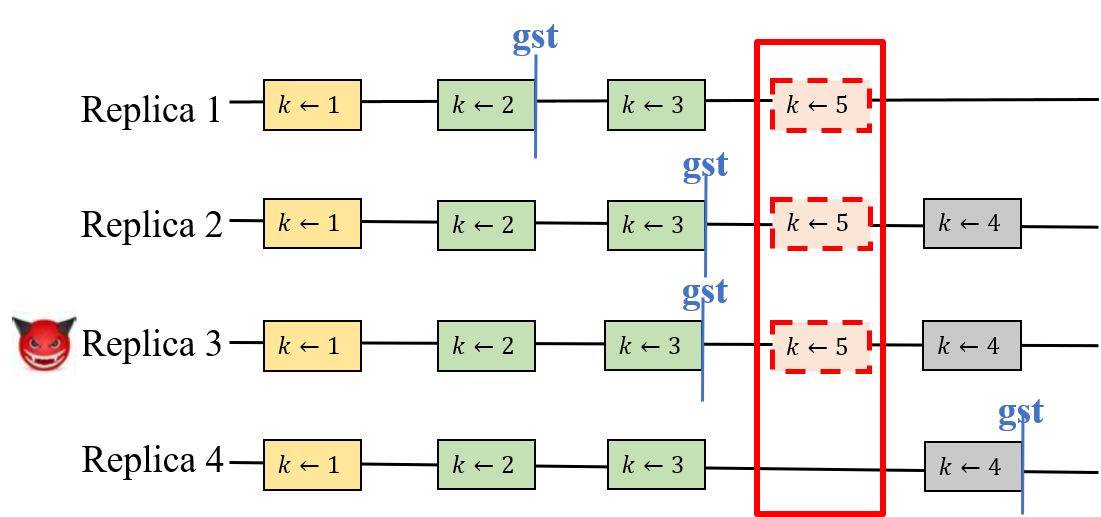}
  \caption{Why the servers in \byzgentlerain{} need to synchronize global stable times.}
  \label{fig:example-gst}
\end{figure}

To preserve causality,
\byzgentlerain{} uses the stabilization mechanism of \gentlerain.
As explained above, the timestamps in \byzgentlerain{} are generated by clients.
However, it is unrealistic to compute a lower bound on physical clock values
of an arbitrary number of clients.
Therefore, each server $\svar$ in \byzgentlerain{} maintains
and periodically computes a global stable time $\gst$
which is a lower bound on physical clock values of the clients it is aware of.
In the following, we argue that simply refusing any updates
with timestamps $\le \gst$ on each server as \gentlerain{} does
may lead to causality violations.
Consider a system of four servers which are replicas
all maintaining a single key $\kvar$,
as shown in Figure~\ref{fig:example-gst}.
Due to asynchrony, these four servers may have different values of $\gst$.
Without loss of generality, we assume that
$\gst_{1} < \gst_{2} = \gst_{3} < \gst_{4}$,
as indicated by vertical lines.
Now suppose that a new update $\uvar: \kvar \gets 5$
with timestamp between $\gst_{3}$ and $\gst_{4}$ arrives,
and we want to install it on $\ge 3$ servers,
using quorum mechanism.
In this scenario, if each server refuses any updates with timestamps
smaller than or equal to its $\gst$,
the update $\uvar$ can only be accepted
by the first 3 servers, indicated by dashed boxes.
Suppose that server 3 is a Byzantine server,
which may expose or hide the update $\uvar$ as it will.
Consequently, later read operations which read from $\ge 3$ servers
may or may not see this update $\uvar$.
That is, the Byzantine server $3$ may cause causality violations.

To cope with this problem,
we need to synchronize the global stable times of servers.
When a server periodically computes its $\gst$,
it checks whether no larger global stable time
has been or is being synchronized.
If so, the server will try to synchronize its $\gst$
among all servers, by running PBFT independently in each partition.
For each partition, the PBFT leader is also responsible for
collecting updates with timestamps $\le \gst$ from $2f + 1$ servers,
and synchronizing them on all servers.
Once successfully synchronized,
a global stable time becomes a \emph{common} global stable time,
denoted $\cgst$, and in each partition
the updates with timestamps $\le \cgst$ on all correct servers are the same.
Therefore, each server can safely refuse any updates with timestamps
smaller than or equal to its $\cgst$.

Still, the classic PBFT is insufficient to guarantee causality,
since a Byzantine leader of each partition
may propose an arbitrary set of updates.
To avoid this, in \byzgentlerain{} the PBFT leader will also
include the sets of updates it collects from $2f+1$ servers
in its \propose{} message.
A server will reject the \propose{} message
if it finds the contents of this message have been manipulated
by checking hash and signatures.

Thus, we make the following contributions:
\begin{itemize}
  \item We define Byzantine Causal Consistency (\byzcc),
    which is a causal consistency variant in the presence of Byzantine faults
    (Section~\ref{section:cc}).
  \item We present \byzgentlerain,
    the first Byzantine-tolerant causal consistency protocol.
    It tolerates up to $f$ Byzantine servers among $3f + 1$ ones
    and any number of Byzantine clients
    (Section~\ref{section:protocol}).
    All reads and updates complete in one round-trip.
  \item We evaluate \byzgentlerain{} on Aliyun.
    The preliminary results show that \byzgentlerain{} is efficient
    on typical workloads (Section~\ref{section:evaluation}).
\end{itemize}

Section~\ref{section:model} describes the system model and failure model.
Section~\ref{section:related-work} discusses related work.
Section~\ref{section:conclusion} concludes the paper.
The proofs can be found in Appendix~\ref{section:proof}.

\section{Model} \label{section:model}


\begin{table}[t]
  \centering
  \caption{Notations.}
  \label{table:notations}
  \begin{tabular}{l|l}
    \hline
    \textbf{Notations} & \textbf{Meaning}\\
    \hline
    $\D$ & number of data centers \\
    $\P$ & number of partitions \\
    $\replica{\pvar}{\dvar}$ & the replica of partition $\pvar$ in data center $\dvar$ \\
    \hline
    $\partition(\kvar)$ & the partition that holds key $\kvar$ \\
    $\replicas(\pvar)$ & the set of replicas of partition $\pvar$ \\
    $\datacenter(\dvar)$ & the set of servers in data center $\dvar$ \\
    $\servers$ & the set of all $\D \times \P$ servers in the key-value store \\
    \hline
    $\clock_{\cvar}$         & clock at client $\cvar$ \\
    $\dt_{\cvar}$  & dependency time at client $\cvar$ \\
    $\cgst_{\cvar}$ & the maximum common global stable time known by client $\cvar$ \\
    \hline
    $\clock^{\pvar}_{\dvar}$ & clock at replica $\replica{\pvar}{\dvar}$ \\
    $\lst^{\pvar}_{\dvar}$   & local stable time at replica $\replica{\pvar}{\dvar}$ \\
    $\gst^{\pvar}_{\dvar}$   & global stable time at replica $\replica{\pvar}{\dvar}$ \\
    $\cgst^{\pvar}_{\dvar}$  & common global stable time at replica $\replica{\pvar}{\dvar}$ \\
    $\currcgst^{\pvar}_{\dvar}$ & temporary common global stable time
                                  at replica $\replica{\pvar}{\dvar}$ during PBFT \\
    \hline
    $\VV^{\pvar}_{\dvar}$ & version vector at replica $\replica{\pvar}{\dvar}$ \\
    $\LV^{\pvar}_{\dvar}$ & local stable time vector at replica $\replica{\pvar}{\dvar}$ \\
    \hline
    $\Key$  & the set of keys, ranged over by $\kvar$ \\
    $\Val$  & the set of values, ranged over by $\vvar$ \\
    $\VVal$ & the set of versioned values, ranged over by $\vvvar$ \\
    $\store^{\pvar}_{\dvar} \subseteq \VVal$
      & store maintained at replica $\replica{\pvar}{\dvar}$ \\
    $\Store$ & the union of stores at all replicas, that is,
      $\Store \triangleq \bigcup_{1 \le i \le \D, 1 \le j \le \P}
      \store^{\jvar}_{\ivar}$  \\
    \hline
  \end{tabular}
\end{table}

We adopt the client/server architecture~\cite{Liskov:ICDCS16, ByzRCM:NAC19},
in which each client or server has its unique id.
Table~\ref{table:notations} summarizes the notations used in this paper.
\subsection{System Model} \label{ss:model}

We consider a distributed multi-version key-value store,
which maintains keys in the set $\Key$ (ranged over by $\kvar$)
with values in the set $\Val$ (ranged over by $\vvar$).
Each value is associated with a unique version,
consisting of the timestamp of the update which creates this version
and the id of the client which issues this update.
We denote by $\VVal$ (ranged over by $\vvvar$) the set of versioned values.

The distributed key-value store runs at $\D$ data centers,
each of which has a full copy of data.
In each data center, the full data is sharded into $\P$ partitions.
For a key $\kvar \in \Key$,
we use $\partition(\kvar)$ to denote the partition that holds $\kvar$.
Each partition is replicated across $D$ data centers.
For a partition $\pvar$,
we use $\replicas(\pvar)$ to denote the set of replicas of $\pvar$.
For a data center $\dvar$,
we use $\datacenter(\dvar)$ to denote the set of servers in $\dvar$.
We denote by $\replica{\pvar}{\dvar}$
the replica of partition $\pvar$ at data center $\dvar$.
We denote by $\servers$ the set of all $\D \times \P$ servers in the key-value store.
For each individual partition $\pvar$,
we call a set $\Q$ of $2f + 1$ replicas in $\replicas(\pvar)$ a quorum
and denote it by $\quorum(\Q)$.

For convenience,
we model the key-value store at replica $\replica{\pvar}{\dvar}$,
denoted $\store^{\pvar}_{\dvar}$, by a set of (unique) versioned values.
That is, $\store^{\pvar}_{\dvar} \subseteq \VVal$.
We denote by $\Store$ the union of stores at all replicas.
That is, $\Store \triangleq \bigcup_{1 \le i \le \D, 1 \le j \le \P}
\store^{\jvar}_{\ivar}$.
The distributed key-value store offers two operations to clients:
\begin{itemize}
	\item $\get(\kvar)$.
    A get operation which returns the value of some version of key $\kvar$.
	\item $\put(\kvar, \vvar)$.
    A put operation which updates key $\kvar$ with value $\vvar$.
    This creates a new version of $\kvar$.
\end{itemize}

We assume that each client or server is equipped with a physical clock,
which is monotonically increasing.
Clocks at different clients are loosely synchronized by a protocol
such as NTP~\footnote{NTP: The Network Time Protocol. \url{http://www.ntp.org/}.}.
The correctness of \byzgentlerain{} does not
depend on the precision of clock synchronization,
but large clock drifts may negatively impact its performance.
\subsection{Failure Model} \label{ss:failure-model}

Clients and servers are either correct or faulty.
Correct clients and servers obey their protocols,
while faulty ones may exhibit Byzantine behaviors~\cite{PBFT:Thesis00},
by deviating arbitrarily from their protocols.

We assume asynchronous point-to-point communication channels
among clients and servers.
Messages may be delayed, duplicated, corrupted, or delivered out of order.
We do not assume known bounds on message delays.
The communication network is fully connected.
We require that if the two ends of a channel are both correct
and the sender keeps retransmitting a message,
then the message can eventually be delivered.

We also assume the channels are authenticated.
Clients and servers can sign messages using digital signatures when needed.
A message $\mvar$ signed by a client $\cvar$ or a replica $\replica{\pvar}{\dvar}$
is denoted by $\sign{\mvar}{}{\cvar}$ or $\sign{\mvar}{\pvar}{\dvar}$, respectively.
We denote by $\valid(\mvar)$ that $m$ is valid in signatures.
We also use a cryptographic hash function $\hash()$,
which is assumed to be collision-resistant:
the probability of an adversary producing inputs $\mvar$ and $\mvar'$
such that $\hash(\mvar) = \hash(\mvar')$
is negligible~\cite{PBFT:Thesis00,ByzLive:DISC20}.

\section{Byzantine Causal Consistency} \label{section:cc}

Causal consistency variants in the literature~\cite{
  CM:DC95,CC:PPoPP16,VCC:POPL17,Jiang:SRDS20}
are defined based on the \emph{happens-before} relation
among events~\cite{Lamport:CACM79}.
However, they are not applicable to systems that allows Byzantine nodes,
particularly Byzantine clients.
We now adapt the happens-before relation in Byzantine-tolerant systems,
and define Byzantine Causal Consistency (\byzcc) as follows.
For two events $e$ and $f$,
we say that $e$ happens before $f$, denoted $e \leadsto f$,
if and only if one of the following three rules holds:
\begin{itemize}
  \item \emph{Session-order}.
    Events $e$ and $f$ are two operation requests
    issued by the same \emph{correct} client,
    and $e$ is issued before $f$.
    We denote it by $e \rel{\so} f$.
    We do \emph{not} require session order among operations
    issued by Byzantine clients.
  \item \emph{Read-from relation}.
    Event $e$ is a \put{} request issued by some client
    and $f$ is a \get{} request issued by a \emph{correct} client,
    and $f$ reads the value updated by $e$.
    We denote it by $e \rel{\rf} f$.
    Since a \get{} of Byzantine clients may return an arbitrary value,
    we do \emph{not} require read-from relation induced by it.
  \item \emph{Transitivity}.
    There is another operation request $g$
    such that $e \leadsto g$ and $g \leadsto f$.
\end{itemize}
If $e \leadsto f$, we also say that $f$ causally depends on $e$
and $e$ is a causal dependency of $f$.
A version $\vvvar$ of a key $\kvar$ causally depends on
version $\vvvar'$ of key $\kvar'$,
if the update of $\vvvar$ causally depends on that of $\vvvar'$.
A key-value store satisfies \byzcc{} if,
when a certain version of a key is visible to a client,
then so are all of its causal dependencies.

\section{The Byz-GentleRain Protocol} \label{section:protocol}

As discussed in Section~\ref{section:intro},
it is the clients in \byzgentlerain{} that are responsible
for generating totally ordered timestamps for updates.
Specifically, when a client issues an update,
it assigns to the update a timestamp
consisting of its current clock and identifier.

As in \gentlerain, we also distinguish between the updates
that have been received by a server
and those that have been made visible to clients.
\byzgentlerain{} guarantees that
an update can be made visible to clients
only if so are all its causal dependencies.
The pseudocode in Algorithms~\ref{alg:client}--\ref{alg:metadata}
dealing with Byzantine faults is underlined.

\subsection{Key Designs} \label{ss:design}

In \byzgentlerain{}, both clients and servers
maintain a \emph{common global stable time} $\cgst$.
We denote the $\cgst$ at client $\cvar$ by $\cgst_{\cvar}$
and that at replica $\replica{\pvar}{\dvar}$ by
$\cgst^{\pvar}_{\dvar}$.
We maintain the following invariants
that are key to the correctness of \byzgentlerain:
\begin{enumerate}[\inv~(I):]
  \item \label{inv:cgst-c-put}
    Consider $\cgst_{\cvar}$ at any time $\timepoint$.
    All updates issued by correct client $\cvar$ after time $\timepoint$
    have a timestamp $> \cgst_{\cvar}$.
  \item \label{inv:cgst-replica}
    Consider $\cgst^{\pvar}_{\dvar}$ at any time $\timepoint$.
    No updates with timestamps $\le \cgst^{\pvar}_{\dvar}$
    will be successfully executed at $> f$ correct replicas
    in $\replicas(\pvar)$ after time $\timepoint$.
  \item \label{inv:cgst-updates}
    Consider a $\cgstvar$ value.
    For any two correct replicas $\replica{\pvar}{\dvar}$
    and $\replica{\pvar}{\ivar}$ (where $i \neq d$) of partition $\pvar$,
    if $\cgst^{\pvar}_{\dvar} \ge \cgstvar$
    and $\cgst^{\pvar}_{\ivar} \ge \cgstvar$,
    then the updates with timestamps $\le \cgstvar$
    in $\store^{\pvar}_{\dvar}$ and $\store^{\pvar}_{\ivar}$
    are the same.
\end{enumerate}

\byzgentlerain{} further enforces the following rules
for reads and updates:
\begin{enumerate}[\byzrule~(I):]
  \item \label{rule:get}
    For a correct replica $\replica{\pvar}{\dvar}$,
    any updates with timestamps $> \cgst^{\pvar}_{\dvar}$
    in $\store^{\pvar}_{\dvar}$ are invisible to any clients.
  \item \label{rule:put}
    Any correct replica $\replica{\pvar}{\dvar}$
    will reject any updates with timestamps $\le \cgst^{\pvar}_{\dvar}$.
  \item \label{rule:get-ts}
    For a read operation with timestamp $\tsvar$
    issued by client $\cvar$,
    any correct replica $\replica{\pvar}{\dvar}$
    that receives this operation
    must wait until $\cgst^{\pvar}_{\dvar} \ge \tsvar$
    before it returns a value to client $\cvar$.
\end{enumerate}

In the following sections,
we explain how these invariants and rules are implemented
and why they are important to the correctness.
%

\subsection{Client Operations} \label{ss:client}


\begin{algorithm}[t]
  \caption{Operations at client $\cvar$}
  \label{alg:client}
  \begin{algorithmic}[1]
    \Procedure{\get}{$\kvar$}
      \label{line:procedure-get}
      \State \underline{\var $\tsvar \gets \max\set{\dt, \cgst}$}
        \label{line:get-ts}
      \State \var $\pvar \gets \partition(\kvar)$
        \label{line:get-partition}
      \State $\send \Call{\getreq}{\kvar, \tsvar} \sendto \Call{\replicas}{\pvar}$
        \label{line:get-send-getreq}
l
      \hStatex
      \State \underline{\wait\receive
        $\set{\sign{\Call{\getack}{\vvar_{i}, \cgstvar_{i}}}{\pvar}{i}
          \mid \replica{\pvar}{\ivar} \in Q} = \M$ {\bf for a quorum} $Q$}
        \label{line:get-wait-receive-getack}
      \State \underline{$\cgst \gets \max\set{\cgst,
        \min\limits_{\replica{\pvar}{\ivar} \in Q} \cgstvar_{i}}$}
        \label{line:get-cgst}
      \State \underline{$\vvar \gets$ the majority $\vvar_{i}$ in $\M$}
        \label{line:get-val}

      \State \Return $\vvar$
        \label{line:get-return}
    \EndProcedure

    \Statex
    \Procedure{\put}{$\kvar, \vvar$}
      \label{line:procedure-put}
      \State \var $p \gets \partition(\kvar)$
        \label{line:put-partition}
      \State \wait $\clock > \cgst$
        \label{line:put-wait-clock}
      \State $\send \sign{\Call{\putreq}{
          \sign{\kvar, \vvar, \clock, \cvar}{}{\cvar}}}{}{\cvar}
        \sendto \Call{\replicas}{\pvar}$
        \label{line:put-send-putreq}

      \hStatex
      \State \underline{\wait\receive $\set{\sign{\Call{\putack}{\cgstvar_{i}}}{\pvar}{i}
        \mid \replica{\pvar}{\ivar} \in Q} = \M$ {\bf for a quorum $Q$}}
        \label{line:put-wait-receive-putack}
      \State \underline{$\cgst \gets \max\set{\cgst,
        \min\limits_{\replica{\pvar}{\ivar} \in Q} \cgstvar_{i}}$}
        \label{line:put-cgst}

      \State $\dt \gets \clock$
        \label{line:put-dt}
      \State \Return \ok
        \label{line:put-return}
    \EndProcedure
  \end{algorithmic}
\end{algorithm}

To capture the session order,
each client maintains a dependency time $\dt$,
which is the clock value of its last put operation.
When a client $\cvar$ issues a $\get$ operation on key $\kvar$,
it first takes as $\tsvar$ the minimum of its dependency time $\dt_{\cvar}$
and common global stable time $\cgst_{\cvar}$
(line~\code{\ref{alg:client}}{\ref{line:get-ts}}).
Then it sends a $\getreq$ request with $\tsvar$ to $\replicas(\pvar)$
(line~\code{\ref{alg:client}}{\ref{line:get-send-getreq}}),
where $\pvar$ is the partition holding $\kvar$
(line~\code{\ref{alg:client}}{\ref{line:get-partition}}).
Next, the client waits to receive a set $\M$ of
$\getack$ responses from a quorum $Q$ of $\replicas(\pvar)$
(line~\code{\ref{alg:client}}{\ref{line:get-wait-receive-getack}}).
For each replica $\replica{\pvar}{\ivar} \in Q$,
the $\getack$ response from $\replica{\pvar}{\ivar}$
carries a value $\vvar_{i}$ of key $\kvar$
and its common global stable time $\cgstvar_{i}$
when $\replica{\pvar}{\ivar}$ computes the value $\vvar_{i}$ to return
(line~\code{\ref{alg:replica}}{\ref{line:getreq-value}},
discussed in Section~\ref{ss:replica}).
The client takes the minimum $\cgstvar_{i}$ for $\replica{\pvar}{\ivar} \in Q$,
and uses it to update $\cgst_{\cvar}$ if the latter is smaller
(line~\code{\ref{alg:client}}{\ref{line:get-cgst}}).

Since there are at most $f$ Byzantine replicas in a partition,
at least $f + 1$ \getack{} responses are from correct replicas.
By \byzrule~(\ref{rule:get-ts}) and \inv~(\ref{inv:cgst-updates}),
these responses from correct replicas
contain the same value, denoted $\vvar$.
Hence, $\vvar$ is the majority $\vvar_{i}$ in $\M$
(line~\code{\ref{alg:client}}{\ref{line:get-val}}).
Finally, the client returns $\vvar$
(line~\code{\ref{alg:client}}{\ref{line:get-return}}).

When a client $\cvar$ issues a \put{} operation on key $\kvar$ with value $\vvar$,
it sends a \putreq{} request carrying its $\clock_{\cvar}$
and id $\cvar$ to $\replicas(\pvar)$
(line~\code{\ref{alg:client}}{\ref{line:put-send-putreq}}),
where $\pvar$ is the partition holding $\kvar$
(line~\code{\ref{alg:client}}{\ref{line:put-partition}}).
Next the client waits to receive a set $\M$ of \putack{} responses
from a quorum $Q$ of $\replicas(\pvar)$
(line~\code{\ref{alg:client}}{\ref{line:put-wait-receive-putack}}).
For each replica $\replica{\pvar}{\ivar} \in Q$,
the \putack{} response from $\replica{\pvar}{\ivar}$
carries its common global stable time $\cgstvar_{i}$.
Then, the client takes the minimum $\cgstvar_{i}$
for $\replica{\pvar}{\ivar} \in Q$,
and uses it to update $\cgst$ if the latter is smaller
(line~\code{\ref{alg:client}}{\ref{line:put-cgst}}).
Finally, $\dt_{\cvar}$ is set to the current $\clock_{\cvar}$
(line~\code{\ref{alg:client}}{\ref{line:put-dt}}).

\subsection{Operation Executions at Replicas} \label{ss:replica}


\begin{algorithm}[t]
  \caption{Operation execution at $\replica{\pvar}{\dvar}$}
  \label{alg:replica}
  \begin{algorithmic}[1]
    \WhenReceive[$\Call{\getreq}{\kvar, \tsvar} \from \client \cvar$]
      \label{line:procedure-getreq}
      \State \underline{\wait\until $\cgst \geq \tsvar$}
        \label{line:getreq-wait-until}

      \hStatex
      \State $\vvar \gets \text{the value of key $k$ with the largest timestamp}
        \le \tsvar \text{ in } \store$
        \label{line:getreq-value}
      \State $\send \sign{\Call{\getack}{\vvar, \cgst}}{\pvar}{\dvar}
        \sendto \client \cvar$
        \label{line:getreq-send-getack}
    \EndWhenReceive

    \Statex
    \WhenReceive[$\sign{\Call{\putreq}{\sign{\kvar, \vvar, \utvar, \cvar}{}{\cvar}}}{}{\cvar} \from \client \cvar$]
      \label{line:procedure-putreq}
	  \State \underline{\pre $\utvar \ge \lst$}
      \label{line:putreq-pre}

    \hStatex
    \State \wait $clock \ge \utvar$
    \State \var $\vvvar \gets \sign{\kvar, \vvar, \utvar, \cvar}{}{\cvar}$
      \label{line:putreq-vv}
	  \State $\store \gets \store \cup \set{\vvvar}$
      \label{line:putreq-store}
    \State $\send \sign{\Call{\putack}{\cgst}}{\pvar}{\dvar} \sendto \client \cvar$
      \label{line:putreq-send-putack}

    \hStatex
	  \State $\send \sign{\Call{\replicate}{\vvvar}}{\pvar}{\dvar}
	    \sendto \replicas(\pvar) \setminus \set{\replica{\pvar}{\dvar}}$
		\label{line:putreq-send-replicate}
    \EndWhenReceive
  \end{algorithmic}
\end{algorithm}

When a replica $\replica{\pvar}{\dvar}$
receives a $\getreq(\kvar, \tsvar)$ request from some client $\cvar$,
it first waits until $\cgst \ge \tsvar$
(line~\code{\ref{alg:replica}}{\ref{line:getreq-wait-until}})
where $\tsvar \triangleq \max\set{\dt_{\cvar}, \cgst_{\cvar}}$
(line~\code{\ref{alg:client}}{\ref{line:get-ts}}).
This implements \byzrule~\ref{rule:get-ts},
and is used to ensure the session guarantee on client $\cvar$
and eventual visibility of updates to $\cvar$.
Then the replica obtains the value $\vvar$ of key $\kvar$
in $\store$ which has the largest timestamp before $\tsvar$,
breaking ties with client ids
(line~\code{\ref{alg:replica}}{\ref{line:getreq-value}}).
Finally, it sends a signed $\getack$ response,
along with the value $\vvar$ and its current $\cgst$, to client $\cvar$
(line~\code{\ref{alg:replica}}{\ref{line:getreq-send-getack}}).

When a replica $\replica{\pvar}{\dvar}$
receives a $\putreq(\kvar, \vvar, \utvar, \cvar)$ request from client $\cvar$,
it first checks the precondition $\utvar \ge \currcgst^{\pvar}_{\dvar}$
(line~\code{\ref{alg:replica}}{\ref{line:putreq-pre}}).
This enforces \byzrule~\ref{rule:put},
and prevents fabricated updates with timestamps $\le \cgst$
at $\store^{\pvar}_{\dvar}$ from now on.
If the precondition holds,
the replica adds the new versioned version
$\vvvar \triangleq \sign{\kvar, \vvar, \utvar, \cvar}{}{\cvar}$ signed by $\cvar$ 
(line~\code{\ref{alg:replica}}{\ref{line:putreq-vv}})
to $\store^{\pvar}_{\dvar}$
(line~\code{\ref{alg:replica}}{\ref{line:putreq-store}}).
Then, the replica sends a signed $\putack$ response,
with its $\cgst^{\pvar}_{\dvar}$, to client $\cvar$
(line~\code{\ref{alg:replica}}{\ref{line:putreq-send-putack}}).
Finally, it broadcasts a signed $\replicate$ message with $\vvvar$
to other replicas in partition $\pvar$
(line~\code{\ref{alg:replica}}{\ref{line:putreq-send-replicate}}).

\subsection{Metadata} \label{ss:metadata}

\subsubsection{Replica States} \label{sss:replica-states}

Each replica $\replica{\pvar}{\dvar}$ maintains
a \emph{version vector} $\VV^{\pvar}_{\dvar}$ of size $\D$,
with each entry per data center.
For data center $\dvar$, $\VV^{\pvar}_{\dvar}[\dvar]$
is the timestamp of the last update that happens at $\replica{\pvar}{\dvar}$.
For data center $\ivar \neq \dvar$,
$\VV^{\pvar}_{\dvar}[\ivar]$ is the largest timestamp
of the updates that happened at replica $\replica{\pvar}{\ivar}$
and have been propagated to $\replica{\pvar}{\dvar}$.
For fault-tolerance, we compute the \emph{local stable time} $\lst^{\pvar}_{\dvar}$
at replica $\replica{\pvar}{\dvar}$
as the \emph{$(f+1)$-st minimum} element of its $\VV^{\pvar}_{\dvar}$.

Each replica $\replica{\pvar}{\dvar}$ also maintains a
\emph{lst vector} $\LV^{\pvar}_{\dvar}$ of size $\P$,
with each entry per partition.
For partition $1 \le \jvar \le \P$,
$\LV^{\pvar}_{\dvar}[\jvar]$ is the largest $\lstvar$
of replica $\replica{\jvar}{\dvar}$
of which $\replica{\pvar}{\dvar}$ is aware.
We compute the \emph{global stable time} $\gst^{\pvar}_{\dvar}$
at replica $\replica{\pvar}{\dvar}$
as the minimum element of its $\LV^{\pvar}_{\dvar}$.
That is, $\gst^{\pvar}_{\dvar} \triangleq
\min_{1 \le \jvar \le P} \LV^{\pvar}_{\dvar}[\jvar]$.

Each replica $\replica{\pvar}{\dvar}$
periodically synchronizes their $\gstvar^{\pvar}_{\dvar}$
with others via PBFT,
and maintains a \emph{common global stable time} $\cgst^{\pvar}_{\dvar}$.
We discuss it in Section~\ref{ss:cgst}.
\subsubsection{Propagation} \label{sss:propagation}


\begin{algorithm}[t]
  \caption{Updating metadata at replica $\replica{\pvar}{\dvar}$}
  \label{alg:metadata}
  \begin{algorithmic}[1]
	\WhenReceive[$\sign{\Call{\replicate}{\vvvar}}{\pvar}{i}$]
	  \label{line:procedure-replicate}
	  \State $\store \gets \store \cup \set{\vvvar}$
	    \label{line:replicate-store}
      \State $\VV[i] \gets \max\set{\VV[i], \vvvar.\utvar}$
	    \label{line:replicate-vv}
	\EndWhenReceive

	\Statex
	\Procedure{\broadcast}{\null}
	  \Comment{Run periodically}
	  \label{line:procedure-broadcast}
      \State \underline{$\lst \gets \max\set{\lst, \text{the $(f+1)$-st minimum element of } \VV[i]}$}
	    \label{line:broadcast-lst}
	  \State $\send \sign{\Call{\bc}{\lst}}{\pvar}{\dvar}
	    \sendto \datacenter(\dvar)$
		\label{line:broadcast-send-bc}
	\EndProcedure

	\Statex
  	\WhenReceive[$\sign{\Call{\bc}{\lstvar}}{j}{\dvar}$]
	  \label{line:procedure-bc}
      \State $\LV[j] \gets \max\set{\LV[j], \lstvar}$
	    \label{line:bc-lv}
      \State \underline{$\gst \gets \max\set{\gst, \min\limits_{1 \le j \le \P} \LV[j]}$}
	    \label{line:bc-gst}
      \If{$\gst > \cgst$}
	    \label{line:bc-gst-if}
      	\State \underline{$\send \sign{\Call{\newcgst}{\gst}}{\pvar}{\dvar}
		  \sendto \servers$}
		  \label{line:bc-send-newcgst}
	  \EndIf
    \EndWhenReceive

	\Statex
	\Procedure{\heartbeat}{\null}
	  \Comment{Run periodically}
	  \label{line:procedure-heartbeat}
	  \State \pre $\clock \ge \VV[\dvar] + \Delta$
	    \label{line:heartbeat-pre}

	  \hStatex
	  \State $\send \sign{\Call{\hb}{\clock}}{\pvar}{\dvar} \sendto \replicas(p)$
		\label{line:heartbeat-send-hb}
	\EndProcedure

	\Statex
	\WhenReceive[$\sign{\Call{\hb}{\clockvar}}{\pvar}{\ivar}$]
	  \label{line:procedure-hb}
	  \State $\VV[\ivar] \gets \max\set{\VV[\ivar], \clockvar}$
	    \label{line:hb-vv}
	\EndWhenReceive
  \end{algorithmic}
\end{algorithm}

As in \gentlerain,
\byzgentlerain{} propagates and updates metadata in the background.
Once a new version $\vvvar$ is created at replica $\replica{\pvar}{\dvar}$,
the replica sends a signed $\replicate(\vvvar)$ message
to other replicas of partition $\pvar$
(line~\code{\ref{alg:replica}}{\ref{line:putreq-send-replicate}}).

When replica $\replica{\pvar}{\dvar}$ receives a $\replicate(\vvvar)$ message
from another replica $\replica{\pvar}{i}$ in data center $i \neq d$,
it stores $\vvvar$ in its $\store^{\pvar}_{\dvar}$
(line~\code{\ref{alg:metadata}}{\ref{line:replicate-store}}),
and updates $\VV^{\pvar}_{\dvar}[\ivar]$ to $\vvvar.\utvar$
if the latter is larger
(line~\code{\ref{alg:metadata}}{\ref{line:replicate-vv}}).

Each replica $\replica{\pvar}{\dvar}$ periodically
computes its $\lst^{\pvar}_{\dvar}$
(line~\code{\ref{alg:metadata}}{\ref{line:broadcast-lst}})
and sends a signed $\broadcast(\lst^{\pvar}_{\dvar})$ message
to $\datacenter(\dvar)$, all the servers in data center $\dvar$
(line~\code{\ref{alg:metadata}}{\ref{line:broadcast-send-bc}}).

When replica $\replica{\pvar}{\dvar}$ receives
a $\broadcast(\lstvar)$ message from another replica $\replica{\jvar}{\dvar}$
in data center $d$, it updates $\LV^{\pvar}_{\dvar}$
and $\gst^{\pvar}_{\dvar}$ accordingly
(lines~\code{\ref{alg:metadata}}{\ref{line:bc-lv}}
and~\code{\ref{alg:metadata}}{\ref{line:bc-gst}}).
If the new $\gst^{\pvar}_{\dvar}$
is larger than $\cgst^{\pvar}_{\dvar}$
(line~\code{\ref{alg:metadata}}{\ref{line:bc-gst-if}}),
the replica $\replica{\pvar}{\dvar}$ sends a
signed $\newcgst(\gst^{\pvar}_{\dvar})$ message
to all servers $\servers$ of the key-value store
(line~\code{\ref{alg:metadata}}{\ref{line:bc-send-newcgst}}).

To ensure liveness, a replica $\replica{\pvar}{\dvar}$
periodically (e.g., at time interval $\Delta$;
line~\code{\ref{alg:metadata}}{\ref{line:heartbeat-pre}})
sends a signed $\hb(\clock^{\pvar}_{\dvar})$ heartbeat
to $\replicas(\pvar)$
(line~\code{\ref{alg:metadata}}{\ref{line:heartbeat-send-hb}}).
When replica $\replica{\pvar}{\dvar}$ receives
a heartbeat $\hb(\clockvar)$ message
from replica $\replica{\pvar}{\ivar}$,
it updates its $\VV^{\pvar}_{\dvar}[\ivar]$ to $\clockvar$
if the latter is larger
(line~\code{\ref{alg:metadata}}{\ref{line:hb-vv}}).

\subsection{Synchronization of Global Stable Time} \label{ss:cgst}


\begin{algorithm}[p]
  \caption{Updating $\cgst$ at replica $\replica{\pvar}{\dvar}$
    (see Table~\ref{table:predicates} in Appendix~\ref{section:proof}
	  for the definitions of $\ValidNewLeader$ and $\safepropose$
	  that are adapted from~\cite{ByzLive:DISC20}.)}
  \label{alg:cgst}
  \begin{algorithmic}[1]
	\Statex
	$\begin{aligned}
	  &\safecollect(\storevar) \triangleq
	    \forall \Call{\collectack}{\_, \stvar_{\ivar}} \in \storevar,
	    \forall \sign{\_, \_, \_, \_}{}{\cvar} = \uvar \in \stvar_{\ivar}.\; \valid(\uvar)
	 \end{aligned}$

	\WhenReceive[\underline{$\sign{\Call{\newcgst}{\gstvar}}{\jvar}{\ivar} = \mvar$}]
	  \label{line:procedure-newcgst}
	  \State \pre $\gstvar \le \lst \land \gstvar > \currcgst$
	    \label{line:newcgst-pre}

	  \hStatex
	  \State $\currcgst \gets \gstvar$
	    \label{line:newcgst-currcgst}
	  \State $\Call{\newview}{\viewvar}$ {\bf with} $\viewvar > \currview$
		\label{line:newcgst-call-newview}
	\EndWhenReceive

	\Statex
	\Upon[$\Call{\newview}{\viewvar}$]
	  \label{line:procedure-newview}
	  \State \pre $\viewvar > \currview$
	    \label{line:newview-pre}

	  \hStatex
	  \State $\currview \gets \viewvar$
	    \label{line:newview-curview}
	  \State $\voted \gets \false$
	    \label{line:newview-voted}
	  \State $\send \sign{\Call{\newleader}{\currview, \preparedview,
		\underline{\currcgst, \preparedstore}, \cert}}{\pvar}{\dvar}$ \\
		\hspace{16pt} $\sendto \leader(\currview)$
	    \label{line:newview-send-newleader}
	\EndUpon

	\Statex
	\WhenReceive[$\set{\sign{\Call{\newleader}{\viewvar, \viewvar_{\ivar},
	  \underline{\cgstvar_{\ivar}, \storevar_{\ivar}}, \certvar_{\ivar}}} {\pvar}{\ivar}
	  \mid \replica{\pvar}{\ivar} \in Q} = \M$ \\
	  \hspace{6pt} {\bf from a quorum $Q$}]
	    \label{line:procedure-newleader}
      \State \pre $\currview = \viewvar \land
	  			   \leader(\viewvar) = \replica{\pvar}{\dvar} \land
				   (\forall \mvar \in \M.\; \ValidNewLeader(\mvar))$
		\label{line:newleader-pre}

	  \hStatex
	  \If{$\exists \jvar.\; \viewvar_{\jvar} = \max\set{\viewvar_{\ivar}
	    \mid \replica{\pvar}{\ivar} \in Q} \neq 0$}
		\label{line:newleader-if}
	    \State	$\send \sign{\Call{\propose}{\viewvar,
		  \underline{\storevar_{\jvar}}, \M}}
		  {\pvar}{\dvar} \sendto \replicas(\pvar)$
		  \label{line:newleader-send-propose-if}
	  \Else
	    \State \underline{$\currcgst \gets \max\set{\cgstvar_{\ivar}
		  \mid \replica{\pvar}{\ivar} \in Q \land \cgstvar_{\ivar} \le \lst^{\pvar}_{\dvar}}$}
		  \label{line:newleader-currcgst}
	  	\State \underline{$\send \sign{\Call{\collect}{\currcgst}}{\pvar}{\dvar}
		  \sendto \Call{\replicas}{\pvar}$}
		  \label{line:newleader-send-collect}
		\State \underline{\wait\receive $\set{\sign{\Call{\collectack}{
		    \currcgst, \stvar_{\ivar}}}{\pvar}{\ivar}
		    \mid \replica{\pvar}{\ivar} \in Q'} = \storevar$} \\
		  \hspace{35pt} \underline{{\bf from a quorum $Q'$} {\bf satisfying}
		  $\safecollect(\storevar)$}
		  \label{line:newleader-wait-receive-collectack}

		\hStatex
	  	\State $\send \sign{\Call{\propose}{\viewvar, \underline{\storevar},
		  \M}}{\pvar}{\dvar} \sendto \replicas(\pvar)$
		  \label{line:newleader-send-propose-else}
	  \EndIf
	\EndWhenReceive

	\Statex
	\WhenReceive[$\sign{\Call{\propose}{\viewvar, \storevar, \M}}
	  {\pvar}{\ivar} = \mvar$]
	  \label{line:procedure-propose}
	  \State \pre $\currview = \viewvar \land
	               \voted = \false \land
	  			   \safepropose(\mvar) \land
				   \underline{\safecollect(\storevar)}$
	    \label{line:propose-pre}

	  \hStatex
	  \State \underline{$\currstore \gets \storevar$}
	    \label{line:propose-currval}
	  \State $\voted \gets \true$
	    \label{line:propose-voted}
      \State $\send \sign{\Call{\prepared}{\viewvar,
	    \underline{\hash(\currstore)}}}
	    {\pvar}{\dvar} \sendto \replicas(\pvar)$
		\label{line:propose-send-prepared}
  	\EndWhenReceive

	\Statex
  	\WhenReceive[$\set{\sign{\Call{\prepared}{\viewvar, \hvar}}{\pvar}{\ivar}
	  \mid \replica{\pvar}{\ivar} \in Q} = \C$ {\bf from a quorum $Q$}]
	  \label{line:procedure-prepared}
	  \State \pre $\currview = \viewvar \land
	               \voted = \true \land
	               \underline{\hash(\currstore) = \hvar}$
	    \label{line:prepared-pre}

	  \hStatex
	  \State $\preparedview \gets \currview$
	    \label{line:prepared-preparedview}
	  \State $\preparedstore \gets \currstore$
	    \label{line:prepared-preparedstore}
	  \State $\cert \gets \C$
	    \label{line:prepared-cert}
	  \State $\send \sign{\Call{\commit}{\viewvar, \hvar}}{\pvar}{\dvar}
	    \sendto \replicas(\pvar)$
		\label{line:prepared-send-commit}
  	\EndWhenReceive

	\Statex
	\WhenReceive[$\set{\sign{\Call{\commit}{\viewvar, \hvar}}{\pvar}{\ivar}
	  \mid \replica{\pvar}{\ivar} \in Q}$ {\bf from a quorum $Q$}]
	  \label{line:procedure-commit}
	  \State \pre $\currview = \preparedview = \viewvar \land
	               \underline{\hash(\currstore) = \hvar}$
	    \label{line:commit-pre}

	  \hStatex
	  \State \underline{$\storevar \gets \bigcup\big\{\stvar_{\ivar} \mid
		\sign{\Call{\collectack}{\cgstvar, \stvar_{\ivar}}}{\pvar}{\ivar}
	    \in \currstore\big\}$}
		\label{line:commit-storevar}
	  \If{$\cgst < \cgstvar$}
	    \label{line:commit-cgst-if}
		\State \underline{$\cgst \gets \cgstvar$}
		  \label{line:commit-cgst}
		\State \underline{$\store \gets \storevar \cup
		  \set{\sign{\_, \_, \utvar, \cvar}{}{\cvar} \in \store
	      \mid \utvar > \currcgst}$}
	      \label{line:commit-store}
	  \EndIf
	\EndWhenReceive

	\Statex
	\WhenReceive[\underline{$\sign{\Call{\collect}{\cgstvar}}{\pvar}{\ivar} = \mvar$}]
	  \label{line:procedure-collect}
	  \State \pre $\cgstvar \le \lst \land \cgstvar \ge \currcgst$
	    \label{line:collect-pre}

	  \hStatex
	  \State $\currcgst \gets \cgstvar$
	    \label{line:collect-currcgst}
	  \State $\stvar \gets
	    \set{\sign{\_, \_, \utvar, \cvar}{}{\cvar} \in \store
	    \mid \utvar \le \currcgst}$
		\label{line:collect-stvar}
	  \State $\send \sign{\Call{\collectack}{\cgstvar, \stvar}}{\pvar}{\dvar}
	    \sendto \replica{\pvar}{\ivar}$
		\label{line:collect-send-collectack}
	\EndWhenReceive
  \end{algorithmic}
\end{algorithm}

Each individual partition $\pvar$ independently
runs PBFT~\cite{PBFT:Thesis00} to reach agreement on
a common global stable time $\cgstvar$
\emph{and} the same set of updates before $\cgstvar$
across $\replicas(\pvar)$ (Algorithm~\ref{alg:cgst}).
We follow the pseudocode of single-shot PBFT
described in~\cite{ByzLive:DISC20},
and refer its detailed description
and correctness proof to~\cite{ByzLive:DISC20}.
In the following, we elaborate the parts
specific to synchronization of global stable time;
see the pseudocode underlined in Algorithm~\ref{alg:cgst}.

When replica $\replica{\pvar}{\dvar}$ receives
a $\newcgst(\gstvar)$ message,
it first checks whether $\gstvar \le \lst^{\pvar}_{\dvar}$ as expected
and $\gstvar > \currcgst^{\pvar}_{\dvar}$
which means that no smaller global stable time has been
or is being synchronized
(line~\code{\ref{alg:cgst}}{\ref{line:newcgst-pre}}).
If so, it sets $\currcgst$ to $\gstvar$
(line~\code{\ref{alg:cgst}}{\ref{line:newcgst-currcgst}}).
Now the replica stops accepting updates
with timestamps $< \currcgst^{\pvar}_{\dvar}$
(line~\code{\ref{alg:replica}}{\ref{line:putreq-pre}}).
Then it triggers a $\newview$ action
with a $\viewvar$ larger than $\currview^{\pvar}_{\dvar}$
(line~\code{\ref{alg:cgst}}{\ref{line:newcgst-call-newview}}).

As in classic PBFT~\cite{PBFT:Thesis00,ByzLive:DISC20},
the $\newview(\viewvar)$ action can also be triggered spontaneously,
due to timeout, or by failure detectors.
When it is triggered at a replica $\replica{\pvar}{\dvar}$,
the replica will send a signed $\newleader$ message
to the leader $\leader(\viewvar)$ of $\viewvar$ in $\replicas(\pvar)$
(line~\code{\ref{alg:cgst}}{\ref{line:newview-send-newleader}}).
The $\newleader$ message carries both $\currcgst^{\pvar}_{\dvar}$
and $\preparedstore^{\pvar}_{\dvar}$
which is the set of updates collected in $\preparedview^{\pvar}_{\dvar}$.

When replica $\replica{\pvar}{\dvar}$ receives
a set $\M$ of $\newleader$ messages
from a quorum $Q$ of $\replicas(\pvar)$,
it selects as its proposal from $\M$
the set $\storevar_{\jvar}$ of collected updates
that is prepared in the highest view, say $\viewvar_{\ivar}$
(line~\code{\ref{alg:cgst}}{\ref{line:newleader-send-propose-if}}),
or, if there are no such $\storevar_{\jvar}$, its own proposal.
In the latter case, the replica sets its $\currcgst^{\pvar}_{\dvar}$
to the maximum of $\cgstvar_{\ivar}$ in $Q$
that are $\le \lst^{\pvar}_{\dvar}$
(line~\code{\ref{alg:cgst}}{\ref{line:newleader-currcgst}}).
Then, it sends a signed $\collect(\currcgst^{\pvar}_{\dvar})$
message to $\replicas(\pvar)$
(line~\code{\ref{alg:metadata}}{\ref{line:newleader-send-collect}}),
and waits to receive enough $\collectack$ messages.

When replica $\replica{\pvar}{\dvar}$ receives
a $\collect(\cgstvar)$ message from replica $\replica{\pvar}{\ivar}$
and $\cgstvar$ passes the precondition
(line~\code{\ref{alg:cgst}}{\ref{line:collect-pre}}),
it first sets its $\currcgst^{\pvar}_{\dvar}$ to $\cgstvar$
(line~\code{\ref{alg:cgst}}{\ref{line:collect-currcgst}}).
Now the replica stops accepting updates
with timestamps $< \currcgst^{\pvar}_{\dvar}$
(line~\code{\ref{alg:replica}}{\ref{line:putreq-pre}}).
Then it sends a signed $\collectack(\cgstvar, \stvar)$ message
back to $\replica{\pvar}{\ivar}$
(line~\code{\ref{alg:cgst}}{\ref{line:collect-send-collectack}}),
where $\stvar$ is the set of updates
with timestamps $\le \currcgst^{\pvar}_{\dvar}$
in its $\store^{\pvar}_{\dvar}$
(line~\code{\ref{alg:cgst}}{\ref{line:collect-stvar}}).

The replica $\replica{\pvar}{\dvar}$ waits to receive
a set, denoted $\storevar$, of $\collectack$ messages
from a quorum $Q'$ of $\replicas(\pvar)$.
We require the messages in $\storevar$ carry the same
$\currcgst$ as in the corresponding $\collect$ message
and the signatures of all the collected updates be valid
(i.e., $\safecollect(\storevar)$ holds).
Then, it sends a signed $\propose$ message
with $\storevar$ as its proposal to $\replicas(\pvar)$
(line~\code{\ref{alg:cgst}}{\ref{line:newleader-send-propose-else}}).

When replica $\replica{\pvar}{\dvar}$ receives
a $\propose$ message from replica $\replica{\pvar}{\ivar}$,
it also checks the predicate $\safecollect(\storevar)$
(line~\code{\ref{alg:cgst}}{\ref{line:propose-pre}}).
After setting $\currcgst^{\pvar}_{\dvar}$ to $\storevar$,
it sends a signed $\prepared$ message to $\replicas(\pvar)$.
When replica $\replica{\pvar}{\dvar}$ receives
a set $\C$ of $\prepared$ messages
from a quorum $Q$ of $\replicas(\pvar)$,
both its $\currview^{\pvar}_{\dvar}$
and $\currstore^{\pvar}_{\dvar}$ are prepared
(lines~\code{\ref{alg:cgst}}{\ref{line:prepared-preparedview}}
and~\code{\ref{alg:cgst}}{\ref{line:prepared-preparedstore}}).
The certification $\C$ is also remembered in $\cert^{\pvar}_{\dvar}$
(line~\code{\ref{alg:cgst}}{\ref{line:prepared-cert}}).
They will be sent to new leaders in view changes
to ensure agreement across views
(line~\code{\ref{alg:cgst}}{\ref{line:newview-send-newleader}}).
Then the replica sends a signed $\commit$ message to $\replicas(\pvar)$
(line~\code{\ref{alg:cgst}}{\ref{line:prepared-send-commit}}).

When replica $\replica{\pvar}{\dvar}$ receives
a set of $\commit$ message from a quorum $Q$ of $\replicas(\pvar)$,
it computes $\storevar$ as the union of
the sets of updates $\stvar_{i}$ collected
from each $\replica{\pvar}{\ivar}$ in $\currstore^{\pvar}_{\dvar}$
(line~\code{\ref{alg:cgst}}{\ref{line:commit-storevar}}).
If $\cgst^{\pvar}_{\dvar}$ is smaller than the $\cgstvar$
in $\currstore^{\pvar}_{\dvar}$,
the replica sets $\cgst^{\pvar}_{\dvar}$ to this $\cgstvar$
(line~\code{\ref{alg:cgst}}{\ref{line:commit-cgst-if}}),
and replaces the set of updates with timestamps
$\le \currcgst^{\pvar}_{\dvar}$ in $\store^{\pvar}_{\dvar}$
with the new $\storevar$
(line~\code{\ref{alg:cgst}}{\ref{line:commit-store}}).

\section{Evaluation} \label{section:evaluation}

We evaluate \byzgentlerain{} in terms of
performance, throughput, and latency of remote update visibility.
We also compare \byzgentlerain{} to \byzrcm.
\subsection{Implementation and Setup} \label{ss:impl-setup}

We implement both \byzgentlerain{} and \byzrcm{} in Java
and use Google's Protocol Buffers~\footnote{
  Protocol Buffers: \url{https://developers.google.com/protocol-buffers}.}
for message serialization.
We implement the key-value stores as \textsf{HashMap},
where each key is associated with a linked list of versioned values.
The key-value stores hold 300 keys in main memory,
with each key of size 8 bytes and each value of size 64 bytes.

We run all experiments on 4 Aliyun~\footnote{
  Alibaba Cloud: \url{https://www.alibabacloud.com/}.} instances running Ubuntu 16.04.
Each instance is configured as a data center,
with 1 virtual CPU core, 300 MB memory, and 1G SSD storage.
All keys are shared into 3 partitions within each data center,
according to their hash values.
\subsection{Evaluation Results} \label{ss:results}

\begin{figure}[t]
  \centering
  \begin{subfigure}[c]{0.45\textwidth}
    \centering
    \includegraphics[width = \textwidth]{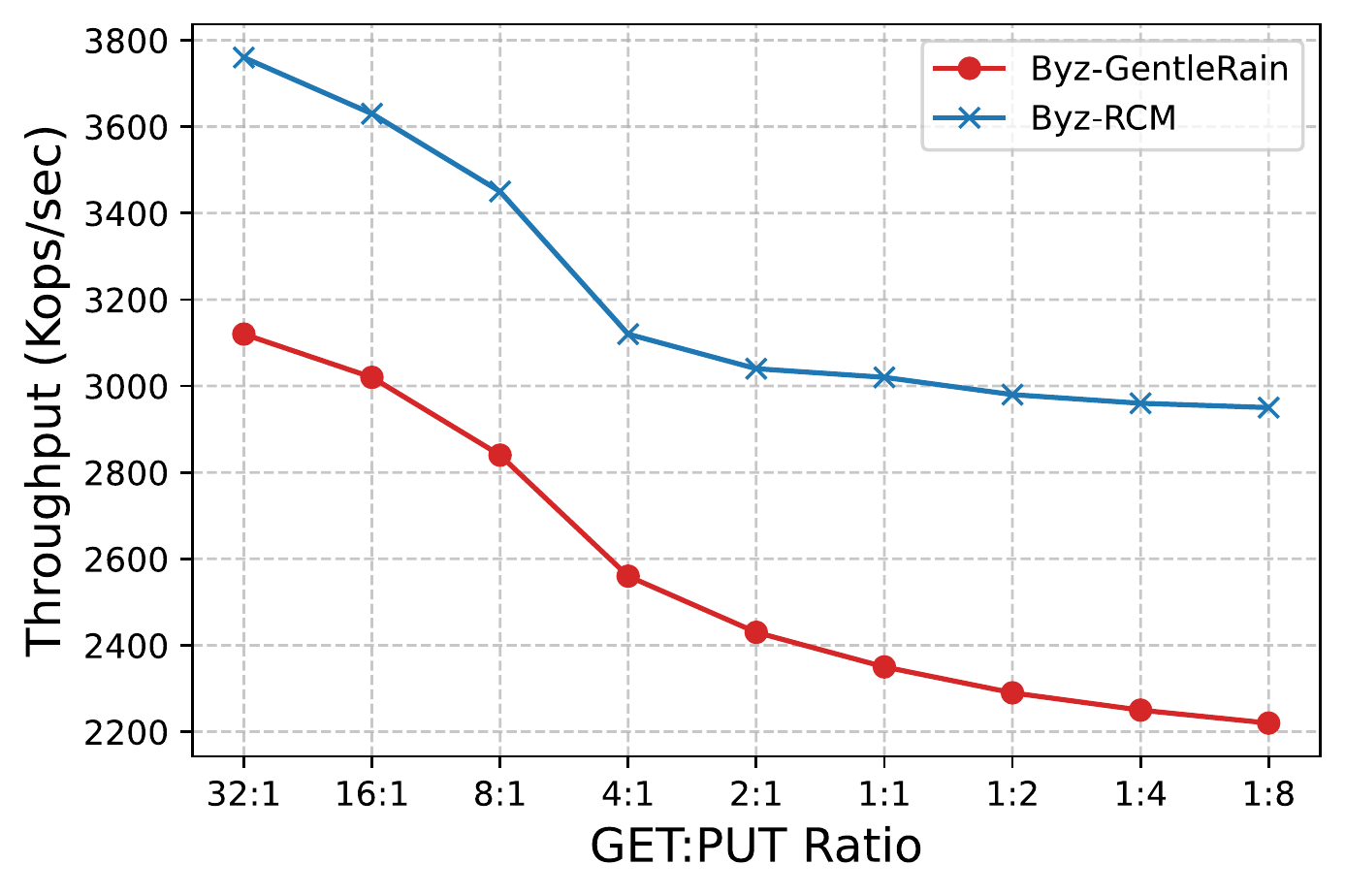}
    \caption{Throughput}
    \label{fig:ff-throughput}
  \end{subfigure}
  \hfill
  \begin{subfigure}[c]{0.45\textwidth}
    \centering
    \includegraphics[width = \textwidth]{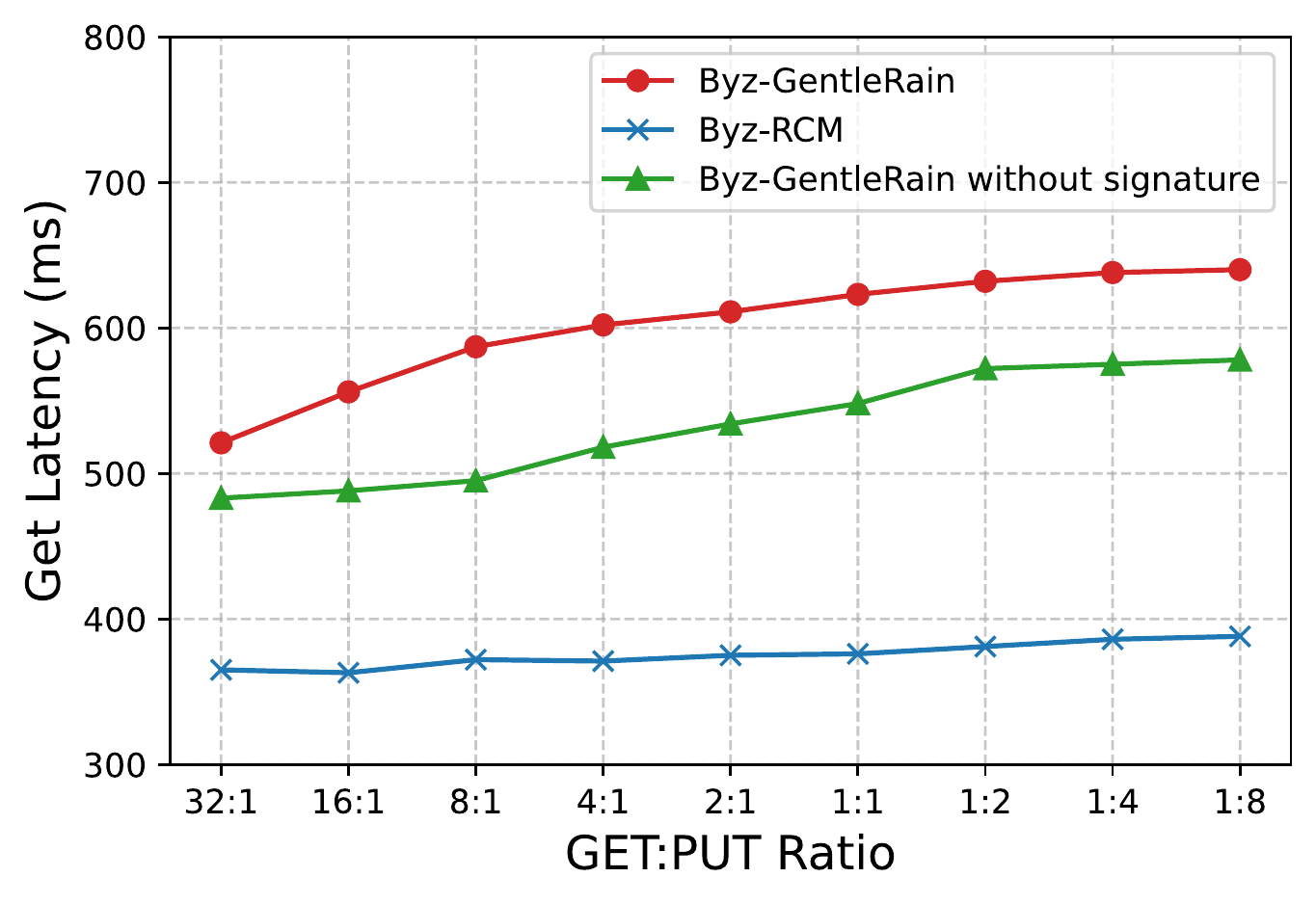}
    \caption{\get{} latency}
    \label{fig:ff-get-latency}
  \end{subfigure}
  \hfill
  \begin{subfigure}[c]{0.45\textwidth}
    \centering
    \includegraphics[width = \textwidth]{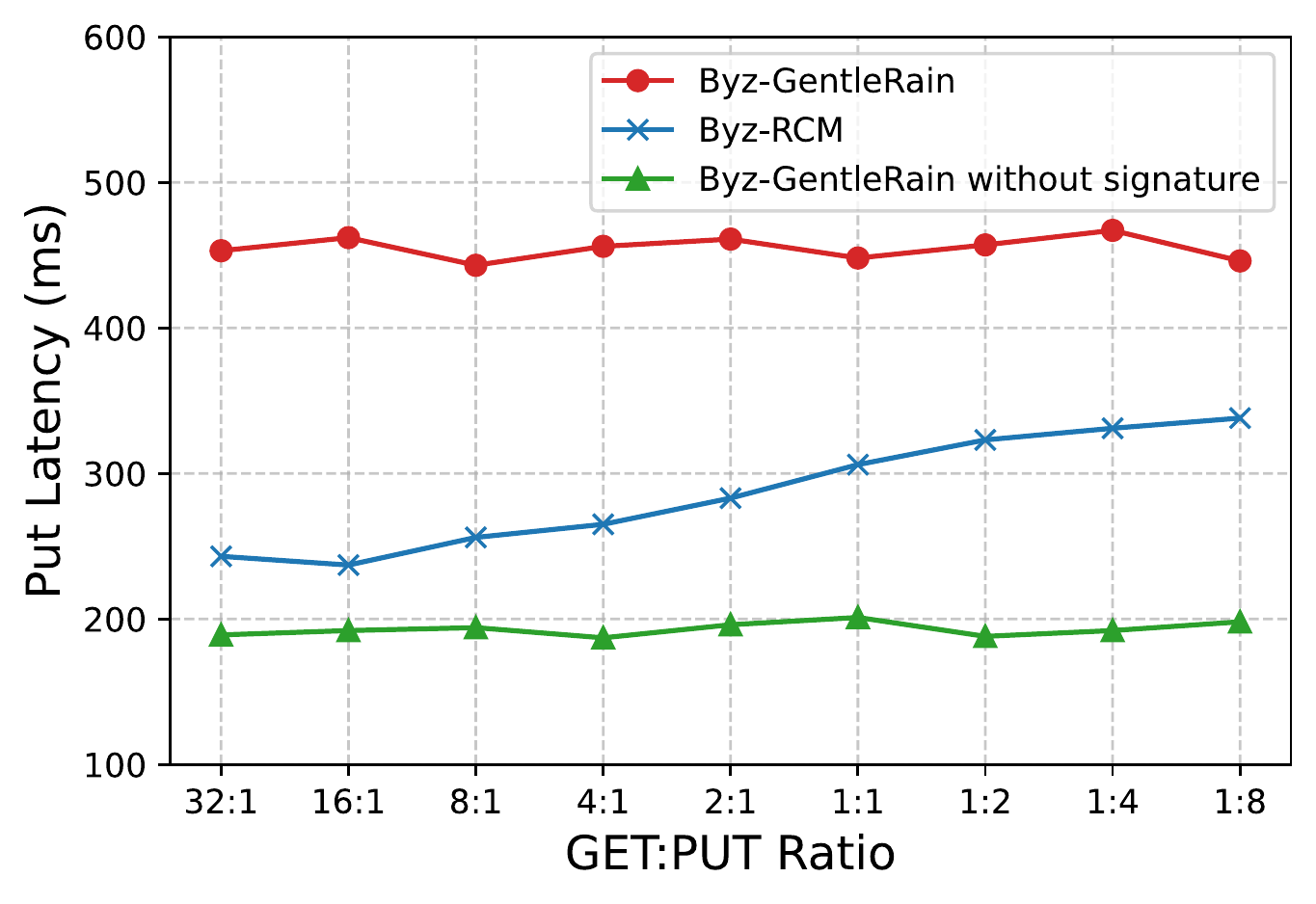}
    \caption{\put{} latency}
    \label{fig:ff-put-latency}
  \end{subfigure}
  \hfill
  \begin{subfigure}[c]{0.45\textwidth}
    \centering
    \includegraphics[width = \textwidth]{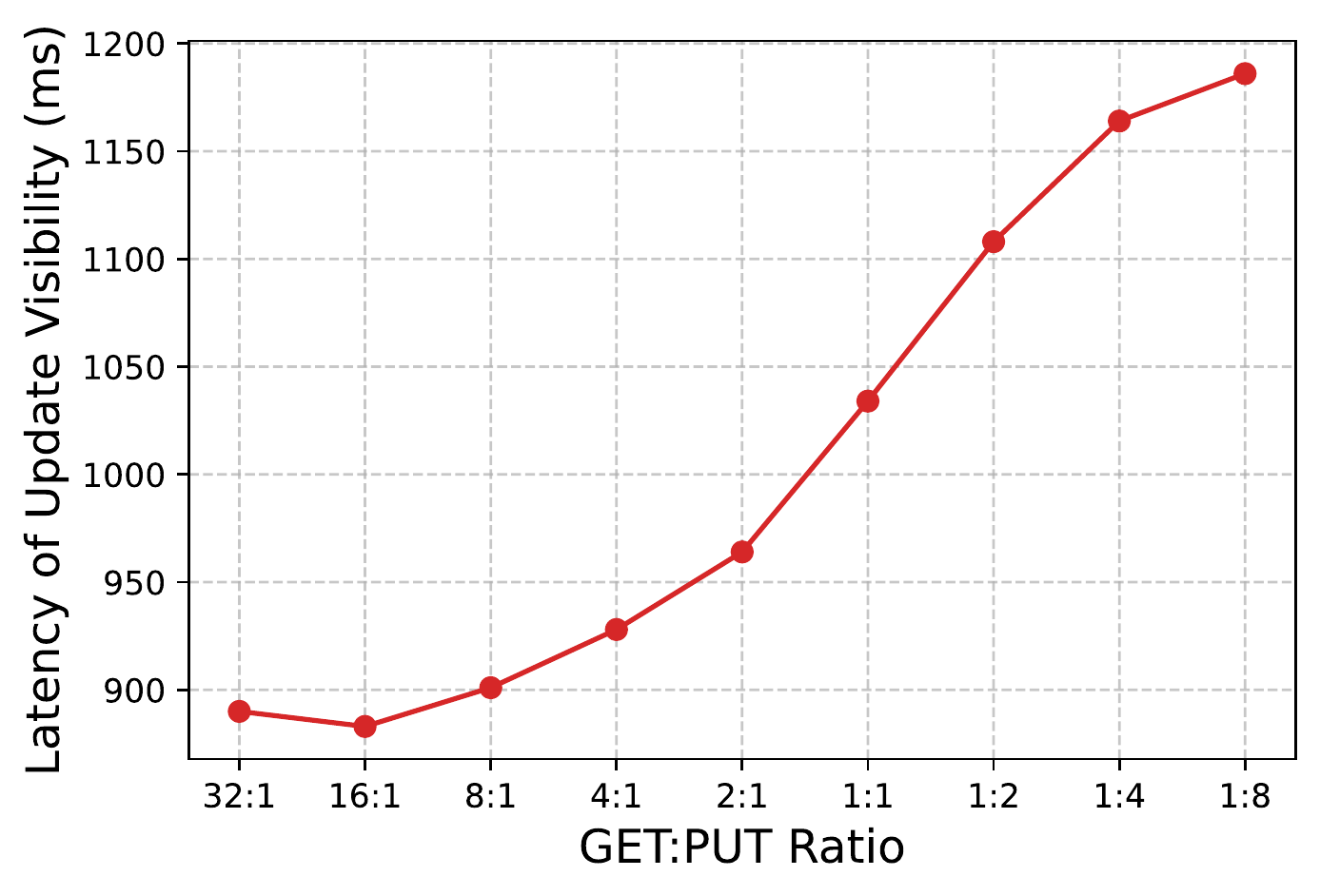}
    \caption{\put{} visibility}
    \label{fig:ff-visibility}
  \end{subfigure}
  \caption{Evaluation of \byzgentlerain{} and \byzrcm{} in failure-free scenarios.}
  \label{fig:ff-comparison}
\end{figure}

Figure~\ref{fig:ff-comparison} shows the system throughput
and the latency of \get{} and \put{} operations
of both \byzgentlerain{} and \byzrcm{} in failure-free scenarios.
We vary the $\get:\put$ ratios of workloads.
First, \byzrcm{} performs better than \byzgentlerain,
especially with low $\get:\put$ ratios.
This is because \byzrcm{} assumes Byzantine fault-free clients
and is \emph{signature-free}.
In contrast, \byzgentlerain{} requires clients sign each $\putreq$ request.
Second, it demonstrates that \byzgentlerain{} is quite efficient
on typical workloads, especially for read-heavy workloads.
Third, the performance of \byzgentlerain{}
is closely comparable to that of \byzrcm,
if digital signatures are omitted deliberately from \byzgentlerain;
see Figures~\ref{fig:ff-get-latency} and \ref{fig:ff-put-latency}.
Finally, Figure~\ref{fig:ff-visibility} shows the latency of \put{} visibility,
which gets higher and higher with more and more \put{} operations.

\begin{figure}[t]
  \begin{subfigure}[c]{0.45\textwidth}
    \centering
    \includegraphics[width = \textwidth]{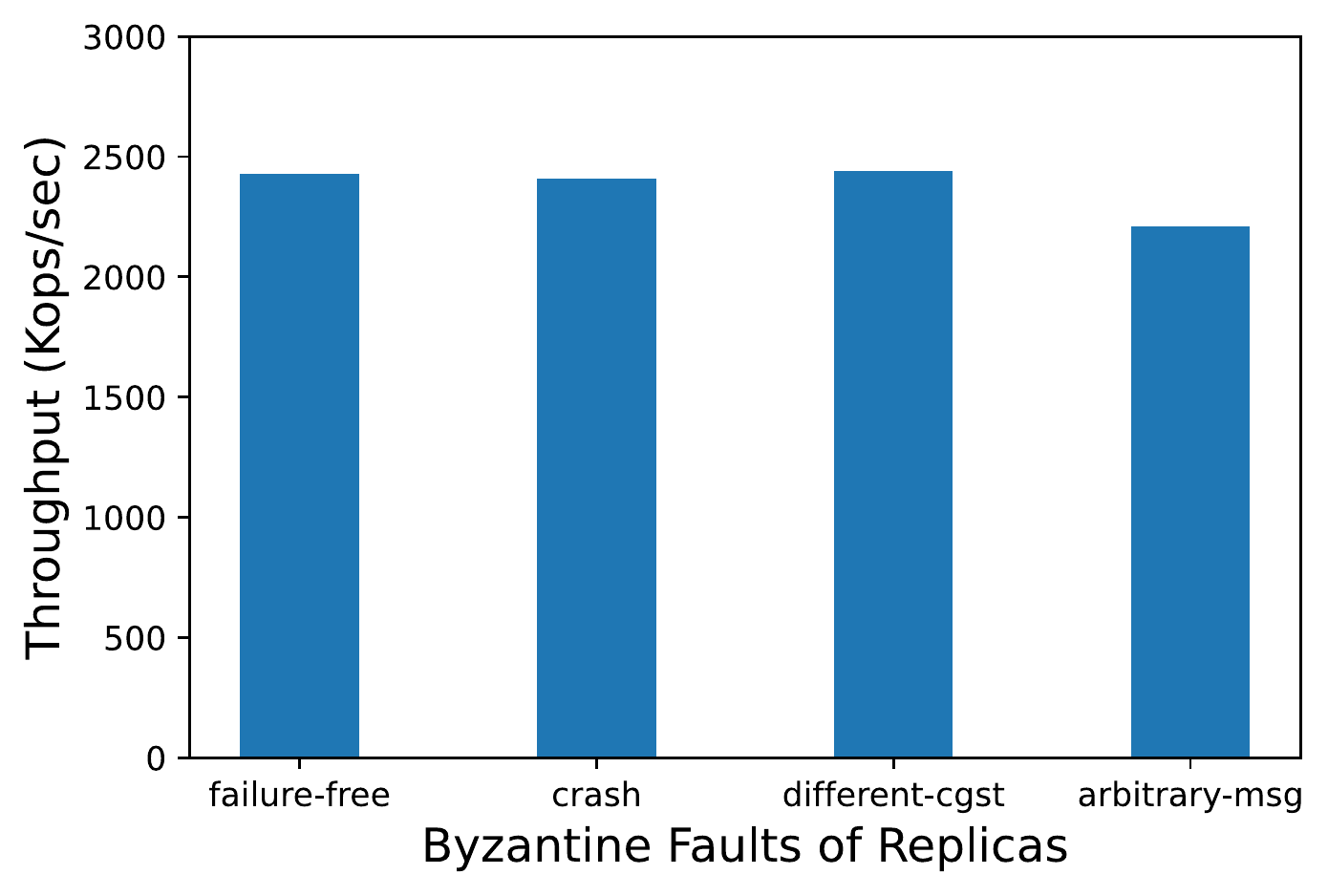}
    \caption{}
    \label{fig:byz-replicas}
  \end{subfigure}
  \hfill
  \begin{subfigure}[c]{0.45\textwidth}
    \centering
    \includegraphics[width = \textwidth]{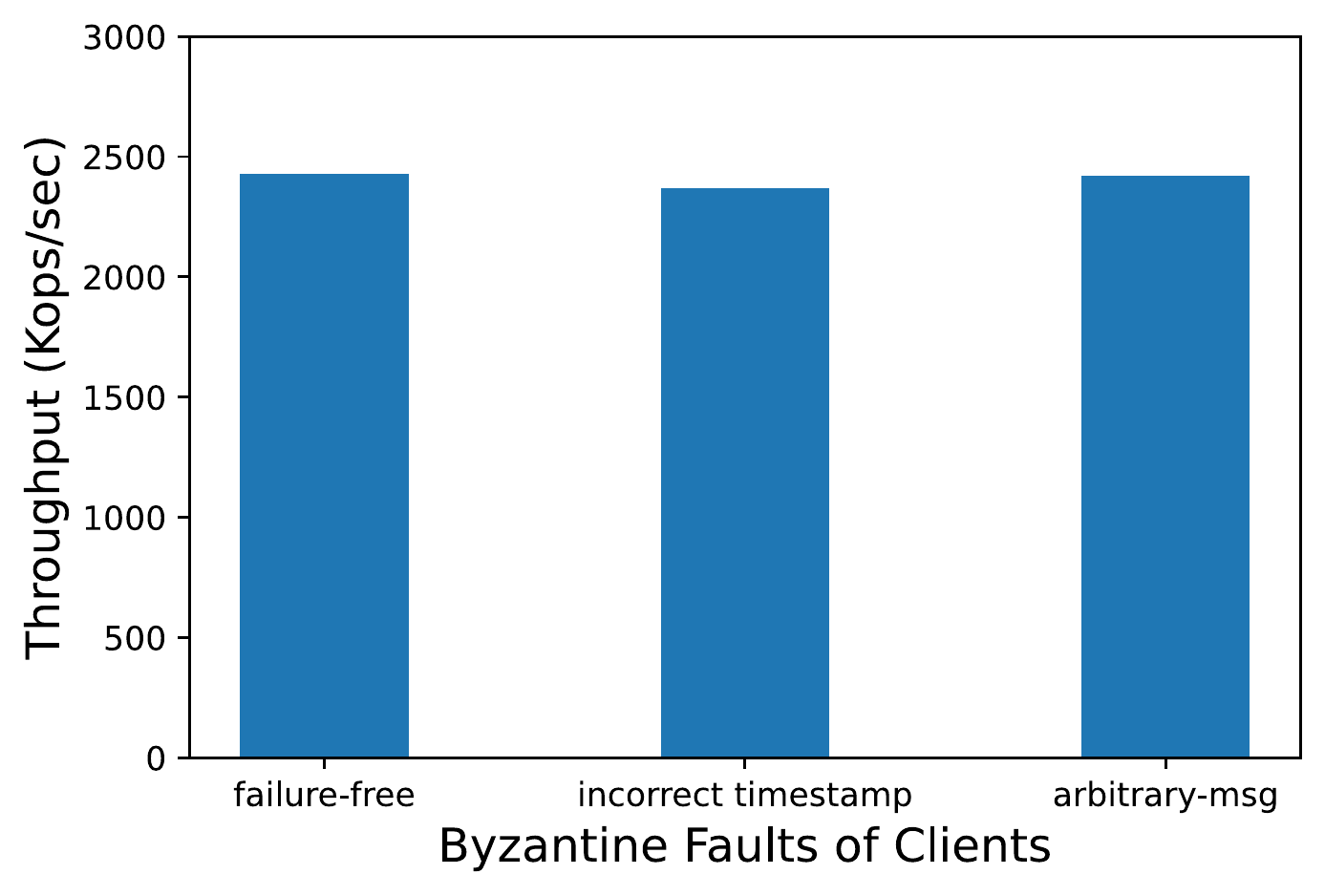}
    \caption{}
    \label{fig:ff-clients}
  \end{subfigure}
  \caption{Evaluation of \byzgentlerain{} in Byzantine scenarios.}
  \label{fig:byz}
\end{figure}

We also evaluate \byzgentlerain{} in several typical Byzantine scenarios.
Generally, both Byzantine clients and replicas
may fail by crash or send arbitrary messages.
Particularly, we consider
\begin{inparaenum}[(1)]
  \item Byzantine clients that may send \getreq{} and/or \putreq{}
    requests with incorrect timestamps
    (line~\code{\ref{alg:client}}{\ref{line:get-send-getreq}}
    and line~\code{\ref{alg:client}}{\ref{line:put-send-putreq}}), and
  \item Byzantine replicas that may broadcast
    different global stable time $\cgstvar$ to replicas in different partitions
    (line~\code{\ref{alg:metadata}}{\ref{line:bc-send-newcgst}}).
\end{inparaenum}

Figure~\ref{fig:byz} demonstrates the impacts of
various Byzantine failures on the system throughput of \byzgentlerain.
On the one hand, the Byzantine failures of types (1) and (2) above
has little impact on throughput.
On the other hand, frequently sending arbitrary messages,
such as \newcgst{} or \propose{} messages, does hurt throughput.
This is probably due to the signatures carried by these messages.

\section{Related Work} \label{section:related-work}

As far as we know, Byz-RCM~\cite{ByzRCM:NAC19}
is the only causal consistency protocol
that considers Byzantine faults.
It achieves causal consistency in the client-server model with $3f + 1$ servers
where up to $f$ servers may suffer Byzantine faults,
and any number of clients may crash.
Byz-RCM has also been shown optimal in terms of failure resilience.
However, Byz-RCM did not tolerate \emph{Byzantine clients},
and thus it could rely on clients' requests
to identify bogus requests from Byzantine servers~\cite{ByzRCM:NAC19}.

Linde \emph{et. al.}~\cite{ClientSide:VLDB20}
consider \emph{peer-to-peer} architecture.
A centralized server maintains the application data,
while clients replicate a subset of data
and can directly communicate with each other.
They analyze the possible attacks of clients to causal consistency
(the centralized server is assumed correct),
derive a secure form of causal consistency,
and propose practical protocols for implementing it.

Liskov and Rodrigues extend the notion of linearizability~\cite{Lin:TOPLAS90}
and define BFT-linearizability in the presence of
Byzantine servers and clients~\cite{Liskov:ICDCS16}.
They also design protocols that achieve BFT-linearizability
despite Byzantine clients.
The protocols require $3f + 1$ replicas
of which up to $f$ replicas may be Byzantine.
They are quite efficient for linearizable systems:
Writes complete in two or three phases,
while reads complete in one or two phases.

Auvolat \emph{et. al.}~\cite{BTCB:TCS21}
defines a Byzantine-tolerant Causal Order broadcast
(BCO-broadcast) abstraction
and proposes an implementation for it.
However, as a communication primitive for replicas,
BCO-broadcast does not capture the get/put semantics
from the perspective of clients.
Thus, it does not prevent Byzantine clients
from violating causality.

\section{Conclusion} \label{section:conclusion}

We present \byzgentlerain, the first
causal consistency protocol which tolerates up to $f$ Byzantine
servers among $3f + 1$ servers in each partition
\emph{and} any number of Byzantine clients.
The preliminary experiments show that
\byzgentlerain{} is quite efficient on typical workloads.
Yet, more extensive large-scale experiments on more benchmarks are needed.
We will also explore optimizations of our synchronization protocol
in Algorithm~\ref{alg:cgst} in future work.


\section{Acknowledgements}
This work was partially supported by the CCF-Tencent Open Fund
(CCF-Tencent RAGR20200124) and
the National Natural Science Foundation of China (No. 61772258).
\bibliographystyle{splncs04}
\bibliography{byz-gentlerain}
\newpage
\setcounter{table}{0}
\renewcommand{\thetable}{A\arabic{table}}
\setcounter{figure}{0}
\renewcommand{\thefigure}{A\arabic{table}}
\begin{appendix}


\begin{table}[t]
  \centering
  \caption{Predicates in Algorithm~\ref{alg:cgst}
    (adapted from~\cite{ByzLive:DISC20}).}
  \label{table:predicates}
  \begin{tabular}{c}
    \hline
	$\begin{aligned}
	    &\preparedpred(\viewvar, \hvar, \M) \triangleq \\
		  &\quad \exists Q.\; \quorum(Q) \land
	        \M = \set{\sign{\Call{\prepared}{\viewvar, \hvar, \M}}{\pvar}{\ivar}
	          \mid \replica{\pvar}{\ivar} \in Q}
	 \end{aligned}$ \\
	\hline
	$\begin{aligned}
	  &\ValidNewLeader(\sign{\Call{\newleader}{\viewvar, \viewvar_{\ivar}, \cgstvar_{\ivar}, \certvar_{\ivar}}}
	    {\pvar}{\ivar}) \triangleq \\
		&\quad \viewvar_{\ivar} < \viewvar \land (\viewvar_{\ivar} \neq 0 \implies
		  \preparedpred(\viewvar_{\ivar}, \hash(\storevar), \M))
	 \end{aligned}$ \\
	\hline
	$\begin{aligned}
	  &\safepropose(\sign{\Call{\propose}{\viewvar, \storevar, \M}}{\pvar}{\jvar}) \triangleq \\
	  &\qquad \replica{\pvar}{\jvar} = \leader(\viewvar) \\
	  &\quad \land \exists Q, \viewvar, \storevar, \M.\; \quorum(\Q) \\
	  &\quad \land \M = \set{\sign{\Call{\newleader}{\viewvar, \viewvar_{\ivar},
	    \underline{\cgstvar_{\ivar}, \storevar_{\ivar}}, \certvar_{\ivar}}}{\pvar}{\ivar}
	    \mid \replica{\pvar}{\ivar} \in \Q} \\
	  &\quad \land (\forall \mvar \in \M.\; \ValidNewLeader(\mvar)) \\
	  &\quad \land ((\exists \jvar.\; \viewvar_{\jvar} \neq 0)
	    \implies (\exists \jvar.\; \viewvar_{\jvar}
		  = \max\set{\viewvar_{\ivar} \mid \replica{\pvar}{\ivar} \in \Q}
		    \land \storevar = \storevar_{\jvar}))
	 \end{aligned}$ \\
	\hline
  \end{tabular}
\end{table}

\section{Correctness of \byzgentlerain} \label{section:proof}

We show that \byzgentlerain{} satisfies \byzcc.
We assume that single-shot PBFT is correct
and refer its detailed correctness proof to~\cite{ByzLive:DISC20}.
Table~\ref{table:predicates} gives the definitions
of the predicates $\ValidNewLeader$ and $\safepropose$
used in Algorithm~\ref{alg:cgst},
which are also adapted from~\cite{ByzLive:DISC20}.

\begin{remark} \label{remark:notations}
  In the following, we use $\R$ and $\W$ to denote the set of
  \get{} and \put{} operations, respectively.
  We also define $\O \triangleq \R \cup \W$
  to denote the set of all operations.

  For a variable, e.g., $\clock_{\cvar}$ at client $\cvar$,
  we refer to its value at time $\timepoint$
  by, e.g., $(\clock_{\cvar})_{\timepoint}$.
\end{remark}

According to the description of Algorithms~\ref{alg:client}
and~\ref{alg:replica},
\begin{lemma} \label{lemma:rules}
  \emph{\byzrule~\ref{rule:get}--\byzrule~\ref{rule:get-ts}}
  are maintained by \emph{\byzgentlerain}.
\end{lemma}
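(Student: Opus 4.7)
}
The plan is to prove each of the three rules by direct inspection of the pseudocode in Algorithms~\ref{alg:client} and~\ref{alg:replica}, treating a correct replica $\replica{\pvar}{\dvar}$ as one that follows its protocol verbatim. I will handle the rules in the order III, I, II, because Rule~\ref{rule:get-ts} is the immediate building block that then lets me argue Rule~\ref{rule:get}, and Rule~\ref{rule:put} is essentially independent.

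First, for \byzrule~\ref{rule:get-ts}: when a correct replica $\replica{\pvar}{\dvar}$ receives $\getreq(\kvar, \tsvar)$ from a client $\cvar$, line~\code{\ref{alg:replica}}{\ref{line:getreq-wait-until}} explicitly blocks execution on the condition $\cgst^{\pvar}_{\dvar} \ge \tsvar$ before any $\getack$ is computed or sent. Because the handler cannot proceed past this wait, the replica necessarily has $\cgst^{\pvar}_{\dvar} \ge \tsvar$ at the moment of return, which is exactly Rule~\ref{rule:get-ts}.

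Next, for \byzrule~\ref{rule:get}: I need to observe that $\cgst^{\pvar}_{\dvar}$ is non-decreasing in time at a correct replica. This is a one-line inductive check over the four places where $\cgst^{\pvar}_{\dvar}$ is written (lines \code{\ref{alg:client}}{\ref{line:get-cgst}}, \code{\ref{alg:client}}{\ref{line:put-cgst}}, and \code{\ref{alg:cgst}}{\ref{line:commit-cgst-if}}--\code{\ref{alg:cgst}}{\ref{line:commit-cgst}}); each is guarded by a $\max$ or a strict increase test. Combined with Rule~\ref{rule:get-ts}, this means that at the moment the replica executes line~\code{\ref{alg:replica}}{\ref{line:getreq-value}} the returned versioned value $\vvar$ has timestamp $\le \tsvar \le \cgst^{\pvar}_{\dvar}$, and this inequality persists. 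Since the only output channel through which $\store^{\pvar}_{\dvar}$ is exposed to a client is the $\getack$ message on line~\code{\ref{alg:replica}}{\ref{line:getreq-send-getack}}, no value with timestamp exceeding $\cgst^{\pvar}_{\dvar}$ is ever made visible, giving Rule~\ref{rule:get}.

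Finally, for \byzrule~\ref{rule:put}: the $\putreq$ handler at a correct replica checks, on line~\code{\ref{alg:replica}}{\ref{line:putreq-pre}}, a precondition that, as explained in the accompanying narrative, stops the replica from accepting any update with timestamp $\le \currcgst^{\pvar}_{\dvar}$; combined with the relation $\cgst^{\pvar}_{\dvar} \le \currcgst^{\pvar}_{\dvar}$ (a correct replica only bumps $\cgst^{\pvar}_{\dvar}$ to the $\currcgst$-value it locked in via lines \code{\ref{alg:cgst}}{\ref{line:newcgst-currcgst}}, \code{\ref{alg:cgst}}{\ref{line:newleader-currcgst}}, or \code{\ref{alg:cgst}}{\ref{line:collect-currcgst}} on commit), this means no such update is added to $\store^{\pvar}_{\dvar}$ via line~\code{\ref{alg:replica}}{\ref{line:putreq-store}} or acknowledged via line~\code{\ref{alg:replica}}{\ref{line:putreq-send-putack}}. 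That is precisely Rule~\ref{rule:put}.

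The routine parts are the inspections of the $\getreq$ and $\putreq$ handlers; the one place that needs a small side argument is the monotonicity of $\cgst^{\pvar}_{\dvar}$ and the relation $\cgst^{\pvar}_{\dvar} \le \currcgst^{\pvar}_{\dvar}$, which is the only non-immediate step and is the main obstacle, although it is still just a short induction over the writes to these two variables in Algorithm~\ref{alg:cgst}.
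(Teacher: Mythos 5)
The paper does not actually prove Lemma~\ref{lemma:rules}: it simply asserts the lemma immediately after the phrase ``According to the description of Algorithms~\ref{alg:client} and~\ref{alg:replica},'' i.e.\ it treats the rules as true by inspection. Your proof therefore goes further than the paper's, making the inspection explicit, which is a welcome improvement; the approach you take is exactly the one the paper implicitly relies on, so there is no methodological divergence to report.

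A couple of small inaccuracies are worth fixing. First, when arguing monotonicity of the replica-side $\cgst^{\pvar}_{\dvar}$ for \byzrule~\ref{rule:get}, you list lines~\code{\ref{alg:client}}{\ref{line:get-cgst}} and~\code{\ref{alg:client}}{\ref{line:put-cgst}} among the writes; those update the \emph{client's} $\cgst_{\cvar}$, not the replica's. The replica's $\cgst^{\pvar}_{\dvar}$ is written only at line~\code{\ref{alg:cgst}}{\ref{line:commit-cgst}}, guarded by line~\code{\ref{alg:cgst}}{\ref{line:commit-cgst-if}}, so monotonicity actually follows from a single write site; your conclusion is unaffected. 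Second, for \byzrule~\ref{rule:put} you lean on the narrative's description (``precondition $\utvar \ge \currcgst^{\pvar}_{\dvar}$'') while the pseudocode at line~\code{\ref{alg:replica}}{\ref{line:putreq-pre}} literally checks $\utvar \ge \lst^{\pvar}_{\dvar}$. To close this you want the chain $\cgst^{\pvar}_{\dvar} \le \currcgst^{\pvar}_{\dvar} \le \lst^{\pvar}_{\dvar}$, which the preconditions at lines~\code{\ref{alg:cgst}}{\ref{line:newcgst-pre}}, \code{\ref{alg:cgst}}{\ref{line:newleader-currcgst}} and~\code{\ref{alg:cgst}}{\ref{line:collect-pre}} are designed to maintain; you have the right invariant in mind, but it would strengthen the proof to state the chain explicitly rather than pointing only at $\cgst \le \currcgst$.
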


\begin{lemma} \label{lemma:invs}
  \emph{\inv~\ref{inv:cgst-c-put}--\inv~\ref{inv:cgst-updates}}
  are maintained by \emph{\byzgentlerain}.
\end{lemma}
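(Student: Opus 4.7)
The plan is to prove the three invariants in order, treating each as a separate assertion about the evolving state of the system and arguing by monotonicity of the relevant $\cgst$-like variables; \inv~\ref{inv:cgst-updates} will be the main obstacle because it requires invoking PBFT agreement on the decided $\currstore$.

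For \inv~\ref{inv:cgst-c-put}, my plan is to read off the client code directly. Any \put{} subsequently issued by a correct client $\cvar$ passes through the wait on line~\ref{line:put-wait-clock}, so its assigned timestamp is $\clock_{\cvar}$ with $\clock_{\cvar} > \cgst_{\cvar}$ at the moment of sending. Because $\clock_{\cvar}$ is strictly monotone and $\cgst_{\cvar}$ only grows (lines~\ref{line:get-cgst} and~\ref{line:put-cgst}), this timestamp exceeds $(\cgst_{\cvar})_{\timepoint}$ for every $\timepoint$ preceding the update.

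For \inv~\ref{inv:cgst-replica}, the plan is to trace how a correct replica $\replica{\pvar}{\dvar}$ advances $\cgst^{\pvar}_{\dvar}$ to some value $c$ and argue that a quorum must have locked $\currcgst \ge c$ beforehand. Concretely, the advance happens in the commit handler (line~\ref{line:commit-cgst}), which is gated by a prior quorum $Q'$ of $\collectack$ messages tagged with $\currcgst = c$; each correct replica in $Q'$ executed line~\ref{line:collect-currcgst} and hence permanently holds $\currcgst \ge c$. Of the $2f+1$ replicas in $Q'$, at least $f+1$ are correct, and by \byzrule~\ref{rule:put} (enforced at line~\ref{line:putreq-pre}) none of them will accept a \putreq{} with timestamp $\le c$ ever again. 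Since $\cgst^{\pvar}_{\dvar}$ itself is monotone at correct replicas (lines~\ref{line:newcgst-currcgst}, \ref{line:newleader-currcgst}, \ref{line:collect-currcgst}, \ref{line:commit-cgst}), the $f{+}1$ bound persists at all later times.

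For \inv~\ref{inv:cgst-updates}, my plan is to leverage single-shot PBFT agreement (assumed correct), together with the hashing and signature machinery. Whenever a correct replica sets $\cgst^{\pvar}_{\dvar}$ to $c$, it has decided within partition $\pvar$ on a $\currstore$ whose $\collectack$ messages carry $\cgstvar = c$; by PBFT agreement, any other correct replica $\replica{\pvar}{\ivar}$ that reaches $\cgst^{\pvar}_{\ivar} \ge c$ has decided the same $\currstore$, since the $\safecollect$ check and the hash inside $\prepared$ votes prevent a Byzantine leader from substituting a different proposal. I would then argue that each correct replica's store content with timestamps $\le c$ equals the union $\bigcup_{\ivar} \stvar_{\ivar}$ computed at line~\ref{line:commit-storevar}: line~\ref{line:commit-store} replaces precisely the low-timestamp entries by that union, and the client-side signatures on versioned values (line~\ref{line:putreq-vv}) prevent Byzantine replicas in $Q'$ from forging spurious updates inside their $\stvar_{\ivar}$. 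The subtle step will be ruling out that a correct replica later admits some fresh update with timestamp $\le c$ outside the committed $\storevar$; here I would invoke the already-established \inv~\ref{inv:cgst-replica} together with the $\putreq$ precondition at line~\ref{line:putreq-pre}, which together ensure that after $\cgst^{\pvar}_{\dvar}$ has risen to $c$ the store content below timestamp $c$ is frozen.
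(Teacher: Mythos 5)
Your proof is correct and follows the same skeleton as the paper's own (very terse, three-sentence) argument: the paper cites line~\code{\ref{alg:client}}{\ref{line:put-wait-clock}} for \inv~\ref{inv:cgst-c-put}, the \putreq{} precondition at line~\code{\ref{alg:replica}}{\ref{line:putreq-pre}} for \inv~\ref{inv:cgst-replica}, and single-shot PBFT correctness for \inv~\ref{inv:cgst-updates}. You have simply unpacked what the paper leaves implicit --- monotonicity of $\cgst_{\cvar}$ for \inv~\ref{inv:cgst-c-put}, the quorum-locking chain through $\collectack$ and $\currcgst$ behind \inv~\ref{inv:cgst-replica}, and the role of $\safecollect$, per-version client signatures and $\hash(\currstore)$ behind \inv~\ref{inv:cgst-updates} --- which is a faithful and necessary elaboration rather than a different route.
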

\begin{proof} \label{proof:invs}
  \inv~\ref{inv:cgst-c-put} holds due to
  line~\code{\ref{alg:client}}{\ref{line:put-wait-clock}}.
  \inv~\ref{inv:cgst-replica} holds due to
  the read rule at line~\code{\ref{alg:replica}}{\ref{line:putreq-pre}}.
  By the correctness of single-shot PBFT~\cite{ByzLive:DISC20},
  \inv~\ref{inv:cgst-updates} holds.
\end{proof}

\begin{definition}[Timestamps] \label{def:ts}
  We use $\tsof(\ovar)$ to denote the timestamp of operation $\ovar$,
  which is defined as follows:
  \begin{itemize}
    \item For a \get{} operation $\ovar$,
      $\tsof(\ovar)$ refers to the value of ``$\tsvar$''
      at line~\code{\ref{alg:client}}{\ref{line:get-ts}}.
    \item For a \put{} operation $\ovar$,
      $\tsof(\ovar)$ refers to the value of ``$\utvar$''
      at line~\code{\ref{alg:client}}{\ref{line:put-send-putreq}}.
  \end{itemize}
\end{definition}

\begin{lemma} \label{lemma:so-w-w}
  \[
    (\wvar \rel{\so} \wvar' \land \wvar \in \W \land \wvar' \in \W) \implies
      \tsof(\wvar') > \tsof(\wvar).
  \]
\end{lemma}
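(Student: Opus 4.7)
My plan is to use the definition of session order together with the behavior of the client's \put{} procedure in Algorithm~\ref{alg:client}. By the definition of $\rel{\so}$ in Section~\ref{section:cc}, $\wvar$ and $\wvar'$ are issued by the same correct client $\cvar$ with $\wvar$ issued before $\wvar'$. Because $\cvar$ is correct, it executes the body of \put{} sequentially, so $\wvar$ fully completes before $\wvar'$ begins.

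Let $\timepoint_{\wvar}$ and $\timepoint_{\wvar'}$ be the real-time instants at which $\cvar$ executes line~\code{\ref{alg:client}}{\ref{line:put-send-putreq}} during $\wvar$ and $\wvar'$, respectively. By Definition~\ref{def:ts}, $\tsof(\wvar) = (\clock_{\cvar})_{\timepoint_{\wvar}}$ and $\tsof(\wvar') = (\clock_{\cvar})_{\timepoint_{\wvar'}}$. Between these two instants the client must, within $\wvar$, wait for a quorum of \putack{} responses at line~\code{\ref{alg:client}}{\ref{line:put-wait-receive-putack}}, update its $\cgst_{\cvar}$, and set $\dt_{\cvar}$; and must then re-enter \put{} for $\wvar'$ and clear the wait at line~\code{\ref{alg:client}}{\ref{line:put-wait-clock}}. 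In particular, $\timepoint_{\wvar'}$ is strictly later than $\timepoint_{\wvar}$ in real time.

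By the monotonicity of the physical clock $\clock_{\cvar}$ (Section~\ref{ss:model}), this already yields $\tsof(\wvar') \geq \tsof(\wvar)$. To upgrade the inequality to strict, I would invoke the wait $\clock > \cgst$ at line~\code{\ref{alg:client}}{\ref{line:put-wait-clock}} executed during $\wvar'$: it forces $(\clock_{\cvar})_{\timepoint_{\wvar'}} > (\cgst_{\cvar})_{\timepoint_{\wvar'}}$. Combined with the fact that the $\cgst_{\cvar}$ refresh on line~\code{\ref{alg:client}}{\ref{line:put-cgst}} runs after $\wvar$'s \putack{} phase, this forces the client's clock to have ticked past its own previous send-time before $\wvar'$ can be sent. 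Hence $\tsof(\wvar') > \tsof(\wvar)$.

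The main obstacle I expect is precisely the jump from monotonic to strictly monotonic clocks: the system model only assumes $\clock_{\cvar}$ is monotonically increasing, so in principle two back-to-back sends from $\cvar$ could record the same clock value. I would close this gap by appealing to the intended physical interpretation used by \gentlerain{}---namely that $\clock$ advances with real time over any positive interval---which is already implicitly relied upon by the \textbf{wait} statements on lines~\code{\ref{alg:client}}{\ref{line:put-wait-clock}} and~\code{\ref{alg:replica}}{8} for liveness; if a fully rigorous treatment is desired, the client's timestamp can be taken as the pair $(\clock_{\cvar}, \mathit{ctr})$ with a per-client monotone counter, which the whole algorithm already treats opaquely modulo its total order.
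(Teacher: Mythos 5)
Your high-level approach is the same as the paper's: both arguments hinge on the physical clock at the correct client $\cvar$ advancing between the two sends, giving $\tsof(\wvar') \ge \tsof(\wvar)$ and then upgrading to strict. The paper's proof is a one-line chain through $\dt_{\cvar}$: at line~\code{\ref{alg:client}}{\ref{line:put-dt}} during $\wvar$ the client sets $\dt_{\cvar} \gets \clock_{\cvar} \ge \tsof(\wvar)$, $\dt_{\cvar}$ is non-decreasing thereafter, so at the time $\timepoint'$ when $\wvar'$ is issued $\tsof(\wvar) \le (\dt_{\cvar})_{\timepoint'} < (\clock_{\cvar})_{\timepoint'} = \tsof(\wvar')$, the strict step coming from the clock advancing strictly after $\dt_{\cvar}$ was last set.

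The concrete flaw in your attempt is the mechanism you pick for strictness. You try to derive $\tsof(\wvar') > \tsof(\wvar)$ from the wait $\clock_{\cvar} > \cgst_{\cvar}$ (line~\code{\ref{alg:client}}{\ref{line:put-wait-clock}}) plus the $\cgst_{\cvar}$ refresh at line~\code{\ref{alg:client}}{\ref{line:put-cgst}}. That would require $\cgst_{\cvar} \ge \tsof(\wvar)$ after the refresh, which simply does not hold: the refresh takes the \emph{minimum} of the $\cgstvar_i$ returned by the quorum, and each such $\cgstvar_i$ is bounded above by the replica's $\lst$, which the \putreq{} precondition (line~\code{\ref{alg:replica}}{\ref{line:putreq-pre}}) constrains to be at most $\tsof(\wvar)$. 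So the post-refresh $\cgst_{\cvar}$ is generally $\le \tsof(\wvar)$, not $\ge$, and the wait $\clock_{\cvar} > \cgst_{\cvar}$ at the start of $\wvar'$ gives you nothing about $\tsof(\wvar)$. The correct bridging variable is $\dt_{\cvar}$, not $\cgst_{\cvar}$.

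Your concern about "monotone versus strictly monotone" is well taken, but it is not special to your proof: the paper's own step $(\dt_{\cvar})_{\timepoint'} < (\clock_{\cvar})_{\timepoint'}$ relies on exactly the same reading that the physical clock strictly advances over a positive real-time interval. Your proposed fallback of a $(\clock_{\cvar}, \mathit{ctr})$ pair would indeed make this airtight, though the paper does not adopt it.
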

\begin{proof} \label{proof:so-w-w}
  Suppose that $\wvar'$ is issued by client $\cvar$
  at time $\timepoint'$.
  \[
    \tsof(\wvar) \le (\dt_{c})_{\timepoint'}
      < (\clock_{\cvar})_{\timepoint'} = \tsof(\wvar').
  \]
\end{proof}

\begin{lemma} \label{lemma:so-r-w}
  \[
    (\rvar \rel{\so} \wvar \land \rvar \in \R \land \wvar \in \W) \implies
      \tsof(\wvar) > \tsof(\rvar).
  \]
\end{lemma}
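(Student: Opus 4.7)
The plan is to unfold the definition of $\tsof$ for both endpoints and compare the client's local clock at the two issue times, exploiting that $r$ and $w$ are issued by the same correct client $c$, and that $\clock_{c}$ is strictly increasing in real time.

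First I would fix notation: let $\timepoint$ be the real time at which $c$ issues $r$, and $\timepoint'$ the time at which $c$ issues $w$ (specifically, the moment of executing line~\code{\ref{alg:client}}{\ref{line:put-send-putreq}}). Since $r \rel{\so} w$ and $c$ is correct, we have $\timepoint < \timepoint'$, and by monotonicity of the physical clock $(\clock_{c})_{\timepoint} < (\clock_{c})_{\timepoint'}$. By Definition~\ref{def:ts}, $\tsof(w) = (\clock_{c})_{\timepoint'}$ and $\tsof(r) = \max\{(\dt_{c})_{\timepoint}, (\cgst_{c})_{\timepoint}\}$, so the goal reduces to showing
\[
  (\clock_{c})_{\timepoint'} > (\dt_{c})_{\timepoint}
  \quad \text{and} \quad
  (\clock_{c})_{\timepoint'} > (\cgst_{c})_{\timepoint}.
\]

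For the first inequality, I would observe that $\dt_{c}$ is only assigned at line~\code{\ref{alg:client}}{\ref{line:put-dt}}, and is always set to the client's clock value at some earlier \put{} operation $w''$ with issue time $\timepoint'' < \timepoint$. Hence $(\dt_{c})_{\timepoint} = (\clock_{c})_{\timepoint''} < (\clock_{c})_{\timepoint'}$ by clock monotonicity (with the degenerate case $(\dt_c)_\timepoint = 0$ handled trivially). For the second inequality, I would use that $\cgst_{c}$ is updated only via $\max$ at lines~\code{\ref{alg:client}}{\ref{line:get-cgst}} and~\code{\ref{alg:client}}{\ref{line:put-cgst}}, hence is monotonically non-decreasing, so $(\cgst_{c})_{\timepoint} \le (\cgst_{c})_{\timepoint'}$. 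Then, because the \put{} executes the guard $\clock_{c} > \cgst_{c}$ at line~\code{\ref{alg:client}}{\ref{line:put-wait-clock}} before reaching line~\code{\ref{alg:client}}{\ref{line:put-send-putreq}}, we have $(\clock_{c})_{\timepoint'} > (\cgst_{c})_{\timepoint'} \ge (\cgst_{c})_{\timepoint}$. Combining the two bounds yields $\tsof(w) > \tsof(r)$.

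The only mildly subtle point, and the main obstacle, is to justify that nothing between the \textbf{wait} at line~\code{\ref{alg:client}}{\ref{line:put-wait-clock}} and the \textbf{send} at line~\code{\ref{alg:client}}{\ref{line:put-send-putreq}} breaks the inequality $\clock_{c} > \cgst_{c}$; this follows because $\clock_{c}$ is monotonically increasing while $\cgst_{c}$ is only updated upon receiving \putack{} messages, which cannot happen between these two lines of the current \put{} invocation. Everything else is a straightforward unwinding of the pseudocode.
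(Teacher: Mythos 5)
Your proof is correct and takes essentially the same route as the paper's: both reduce $\tsof(r)$ to $\max\{(\dt_c)_\sigma, (\cgst_c)_\sigma\}$, bound the $\cgst$ term via the guard $\clock_c > \cgst_c$ at line~\code{\ref{alg:client}}{\ref{line:put-wait-clock}} (which the paper packages as \inv~\ref{inv:cgst-c-put}) plus monotonicity of $\cgst_c$, and bound the $\dt$ term via clock monotonicity. The paper leaves the $\dt$-side clock-monotonicity step and the guard-stability point implicit, which you spell out explicitly; that is a reasonable elaboration, not a different argument.
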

\begin{proof} \label{proof:so-r-w}
  Suppose that $\rvar$ and $\wvar$ are issued by correct client $\cvar$
  at time $\timepoint_{\rvar}$ and $\timepoint_{\wvar}$, respectively.
  By line~\code{\ref{alg:client}}{\ref{line:get-ts}},
  \[
    \tsof(\rvar) = \max\set{(\cgst_{\cvar})_{\timepoint_{\rvar}},
      (\dt_{\cvar})_{\timepoint_{\rvar}}}.
  \]
  By \inv~\ref{inv:cgst-c-put},
  \[
    \tsof(\wvar) > (\cgst_{\cvar})_{\timepoint_{\wvar}}
                 > (\cgst_{\cvar})_{\timepoint_{\rvar}}.
  \]
  Moreover,
  \[
    \tsof(\wvar) > (\dt_{\cvar})_{\timepoint_{\wvar}}
                 \ge (\dt_{\cvar})_{\timepoint_{\rvar}}.
  \]
  Putting it together yields
  \[
    \tsof(\wvar) > \tsof(\rvar).
  \]
\end{proof}

\begin{lemma} \label{lemma:so-w-r}
  \[
    (\wvar \rel{\so} \rvar \land \wvar \in \W \land \rvar \in \R) \implies
      \tsof(\rvar) \ge \tsof(\wvar).
  \]
\end{lemma}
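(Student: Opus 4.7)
The plan is to unfold the definition of $\tsof$ on both sides and chase the monotonicity of the client's $\clock_{\cvar}$ and $\dt_{\cvar}$ variables through the session. Let $\cvar$ be the correct client issuing both operations, let $\timepoint_{\wvar}$ be the time at which $\cvar$ sends the \putreq{} for $\wvar$ (line~\code{\ref{alg:client}}{\ref{line:put-send-putreq}}), let $\timepoint_{\wvar}^{\text{end}}$ be the time at which $\cvar$ executes line~\code{\ref{alg:client}}{\ref{line:put-dt}} for $\wvar$, and let $\timepoint_{\rvar}$ be the time at which $\cvar$ executes line~\code{\ref{alg:client}}{\ref{line:get-ts}} for $\rvar$. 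Since $\wvar \rel{\so} \rvar$, we have $\timepoint_{\wvar} < \timepoint_{\wvar}^{\text{end}} < \timepoint_{\rvar}$.

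First I would observe that by Definition~\ref{def:ts},
\[
  \tsof(\wvar) = (\clock_{\cvar})_{\timepoint_{\wvar}}.
\]
Next, line~\code{\ref{alg:client}}{\ref{line:put-dt}} sets $\dt_{\cvar}$ to $(\clock_{\cvar})_{\timepoint_{\wvar}^{\text{end}}}$, and because the physical clock is monotonically increasing,
\[
  (\dt_{\cvar})_{\timepoint_{\wvar}^{\text{end}}}
    = (\clock_{\cvar})_{\timepoint_{\wvar}^{\text{end}}}
    \ge (\clock_{\cvar})_{\timepoint_{\wvar}}
    = \tsof(\wvar).
\]
Since $\dt_{\cvar}$ is only ever written at line~\code{\ref{alg:client}}{\ref{line:put-dt}} to the current $\clock_{\cvar}$ value, it is itself monotonically non-decreasing, so
\[
  (\dt_{\cvar})_{\timepoint_{\rvar}} \ge (\dt_{\cvar})_{\timepoint_{\wvar}^{\text{end}}}
    \ge \tsof(\wvar).
\]

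Finally, by Definition~\ref{def:ts} and line~\code{\ref{alg:client}}{\ref{line:get-ts}},
\[
  \tsof(\rvar) = \max\set{(\dt_{\cvar})_{\timepoint_{\rvar}},
    (\cgst_{\cvar})_{\timepoint_{\rvar}}}
    \ge (\dt_{\cvar})_{\timepoint_{\rvar}}
    \ge \tsof(\wvar),
\]
which is the desired inequality. There is no real obstacle here: unlike Lemma~\ref{lemma:so-r-w}, we do not need to invoke \inv~\ref{inv:cgst-c-put} (since the inequality is non-strict), so the argument is just careful bookkeeping of the monotonicity of $\clock_{\cvar}$ and $\dt_{\cvar}$ across the three relevant time points. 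The only subtlety to get right is that $\dt_{\cvar}$ is updated only after the \putack{} arrives, so one must take $\timepoint_{\wvar}^{\text{end}}$ rather than $\timepoint_{\wvar}$ as the reference point when relating $\dt_{\cvar}$ to $\tsof(\wvar)$.
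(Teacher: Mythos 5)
Your proof is correct and takes the same approach as the paper: bound $\tsof(\rvar)$ from below by $(\dt_{\cvar})_{\timepoint_{\rvar}}$ via line~\code{\ref{alg:client}}{\ref{line:get-ts}}, then bound that by $\tsof(\wvar)$ via line~\code{\ref{alg:client}}{\ref{line:put-dt}}. The paper's version is just terser, leaving implicit the intermediate time point $\timepoint_{\wvar}^{\text{end}}$ and the monotonicity of $\clock_{\cvar}$ and $\dt_{\cvar}$ that you spell out.
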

\begin{proof} \label{proof:so-w-r}
  Suppose that $\rvar$ are issued by correct client $\cvar$ at time $\timepoint$.
  By line~\code{\ref{alg:client}}{\ref{line:get-ts}},
  \[
    \tsof(\rvar) = \max\set{(\cgst_{\cvar})_{\timepoint},
      (\dt_{\cvar})_{\timepoint}}
      \ge (\dt_{\cvar})_{\timepoint}.
  \]
  By line~\code{\ref{alg:client}}{\ref{line:put-dt}},
  \[
    (\dt_{\cvar})_{\timepoint} \ge \tsof(\wvar).
  \]
  Thus,
  \[
    \tsof(\rvar) \ge \tsof(\wvar).
  \]
\end{proof}

\begin{lemma} \label{lemma:so-r-r}
  \[
    (\rvar \rel{\so} \rvar' \land \rvar \in \R \land \rvar' \in \R) \implies
      \tsof(\rvar') \ge \tsof(\rvar).
  \]
\end{lemma}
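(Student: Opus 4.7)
The plan is to reduce the statement to the monotonicity of the two quantities $\dt_{\cvar}$ and $\cgst_{\cvar}$ maintained at a correct client $\cvar$, since by Definition~\ref{def:ts} and line~\code{\ref{alg:client}}{\ref{line:get-ts}}, the timestamp of a read $\rvar$ issued by $\cvar$ at time $\timepoint_{\rvar}$ is exactly $\max\set{(\dt_{\cvar})_{\timepoint_{\rvar}}, (\cgst_{\cvar})_{\timepoint_{\rvar}}}$. So the lemma will follow once we show that both $\dt_{\cvar}$ and $\cgst_{\cvar}$ are non-decreasing in time at any correct client, and then use that $\rvar \rel{\so} \rvar'$ implies the two reads were issued by the same correct client at times $\timepoint_{\rvar} < \timepoint_{\rvar'}$ (by the definition of session order from Section~\ref{section:cc}).

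First I would fix a correct client $\cvar$ and inspect every line in Algorithm~\ref{alg:client} that writes $\dt_{\cvar}$ or $\cgst_{\cvar}$. For $\dt_{\cvar}$, the only assignment is line~\code{\ref{alg:client}}{\ref{line:put-dt}}, which sets $\dt_{\cvar} \gets \clock_{\cvar}$; combined with the monotonicity of the physical clock at a correct client (assumed in Section~\ref{ss:model}) together with the wait at line~\code{\ref{alg:client}}{\ref{line:put-wait-clock}}, this guarantees the values written to $\dt_{\cvar}$ are non-decreasing. For $\cgst_{\cvar}$, the only assignments are lines~\code{\ref{alg:client}}{\ref{line:get-cgst}} and~\code{\ref{alg:client}}{\ref{line:put-cgst}}, each of which is of the form $\cgst_{\cvar} \gets \max\set{\cgst_{\cvar}, \cdot}$, so the value of $\cgst_{\cvar}$ is non-decreasing by construction.

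Then I would put it together: let $\timepoint_{\rvar}$ and $\timepoint_{\rvar'}$ be the times at which $\cvar$ executes line~\code{\ref{alg:client}}{\ref{line:get-ts}} for $\rvar$ and $\rvar'$ respectively. Since $\rvar \rel{\so} \rvar'$ we have $\timepoint_{\rvar} < \timepoint_{\rvar'}$, and by the monotonicity just established,
\[
  (\dt_{\cvar})_{\timepoint_{\rvar'}} \ge (\dt_{\cvar})_{\timepoint_{\rvar}}
  \quad \text{and} \quad
  (\cgst_{\cvar})_{\timepoint_{\rvar'}} \ge (\cgst_{\cvar})_{\timepoint_{\rvar}}.
\]
Taking $\max$ on both sides then yields $\tsof(\rvar') \ge \tsof(\rvar)$.

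I do not expect any real obstacle here; this lemma is strictly easier than Lemma~\ref{lemma:so-w-r}, which already leans on the monotonicity of $\dt_{\cvar}$ via line~\code{\ref{alg:client}}{\ref{line:put-dt}}. The only subtlety is to make the monotonicity explicit as an invariant (since the earlier lemmas just used it implicitly), and to remember that the session-order premise restricts us to a single \emph{correct} client so that the client-side code in Algorithm~\ref{alg:client} is actually the one that computed both timestamps.
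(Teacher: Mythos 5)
Your proposal is correct and takes essentially the same route as the paper: both unpack $\tsof(\rvar)$ and $\tsof(\rvar')$ via line~\code{\ref{alg:client}}{\ref{line:get-ts}} and conclude from monotonicity of $\dt_{\cvar}$ and $\cgst_{\cvar}$ at a correct client. The only difference is that you spell out why those two quantities are non-decreasing by inspecting the assignment sites in Algorithm~\ref{alg:client}, whereas the paper simply asserts the monotonicity.
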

\begin{proof} \label{proof:so-r-r}
  Suppose that $\rvar$ and $\rvar'$ are issued by correct client $\cvar$
  at time $\timepoint$ and $\timepoint'$, respectively.
  By line~\code{\ref{alg:client}}{\ref{line:get-ts}},
  \[
    \tsof(\rvar) = \max\set{(\cgst_{\cvar})_{\timepoint},
      (\dt_{\cvar})_{\timepoint}},
  \]
  and
  \[
    \tsof(\rvar') = \max\set{(\cgst_{\cvar})_{\timepoint'},
      (\dt_{\cvar})_{\timepoint'}}.
  \]
  Moreover,
  \[
    (\cgst_{\cvar})_{\timepoint'} \ge (\cgst_{\cvar})_{\timepoint}
    \land (\dt_{\cvar})_{\timepoint'} \ge (\dt_{\cvar})_{\timepoint}.
  \]
  Thus,
  \[
    \tsof(\rvar') \ge \tsof(\rvar).
  \]
\end{proof}

\begin{lemma} \label{lemma:so}
  \[
    \ovar \rel{\so} \ovar' \land \ovar \in \O \land \ovar' \in \O
      \implies \tsof(\ovar') \ge \tsof(\ovar).
  \]
\end{lemma}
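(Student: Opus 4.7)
The plan is to prove this by straightforward case analysis on the operation types of $\ovar$ and $\ovar'$, since the four preceding lemmas (Lemmas~\ref{lemma:so-w-w}, \ref{lemma:so-r-w}, \ref{lemma:so-w-r}, and~\ref{lemma:so-r-r}) already cover every combination in $\set{\R, \W} \times \set{\R, \W}$. Each of those lemmas concludes either $\tsof(\ovar') > \tsof(\ovar)$ or $\tsof(\ovar') \ge \tsof(\ovar)$, both of which imply the weaker inequality required here.

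First I would enumerate the four cases based on $(\ovar, \ovar') \in \set{\W \times \W,\ \R \times \W,\ \W \times \R,\ \R \times \R}$. In the $\W \times \W$ case, Lemma~\ref{lemma:so-w-w} gives the strict inequality, which entails $\tsof(\ovar') \ge \tsof(\ovar)$. Similarly, Lemma~\ref{lemma:so-r-w} dispatches the $\R \times \W$ case with a strict inequality, and Lemmas~\ref{lemma:so-w-r} and~\ref{lemma:so-r-r} directly give the non-strict inequality for the remaining two cases. Since $\O = \R \cup \W$ and these cases are exhaustive, the conclusion follows.

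There is essentially no main obstacle: the content of the lemma is just the disjunctive closure of the four cases already established, and the proof is a one-line case split invoking each prior lemma. The only thing to double-check is that the session-order hypothesis $\ovar \rel{\so} \ovar'$ carries over unchanged to each sub-lemma (which it does, since all four sub-lemmas take exactly that hypothesis on operations of the appropriate types, with $\so$ only defined between requests of the same correct client as per Section~\ref{section:cc}).
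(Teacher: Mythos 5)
Your proposal is correct and matches the paper's proof exactly: the paper simply cites Lemmas~\ref{lemma:so-w-w} through~\ref{lemma:so-r-r}, which is precisely the four-way case split you spell out. Nothing further to add.
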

\begin{proof} \label{proof:so}
  By Lemmas~\ref{lemma:so-w-w}--\ref{lemma:so-r-r}.
\end{proof}

\begin{lemma} \label{lemma:read-from}
  \[
    \wvar \rel{\rf} \rvar \land \wvar \in \W \land \rvar \in \R
      \implies \tsof(\rvar) \ge \tsof(\wvar).
  \]
\end{lemma}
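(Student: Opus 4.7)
The plan is to leverage the fact that the definition of the read-from relation explicitly requires the reader to be a \emph{correct} client, so the entire body of \get{} at the client side faithfully follows Algorithm~\ref{alg:client}. First, I would note that if $\wvar \rel{\rf} \rvar$, then $\rvar$ is issued by some correct client $\cvar$; hence the \getreq{} message sent at line~\code{\ref{alg:client}}{\ref{line:get-send-getreq}} carries exactly $\tsvar = \tsof(\rvar)$ by Definition~\ref{def:ts}. The goal is then to show that any version the client actually returns carries a timestamp $\le \tsvar$.

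Next, I would track what each correct replica in the responding quorum returns. By line~\code{\ref{alg:replica}}{\ref{line:getreq-wait-until}}, a correct replica $\replica{\pvar}{\ivar}$ only replies once $\cgst^{\pvar}_{\ivar} \ge \tsvar$, and then selects, at line~\code{\ref{alg:replica}}{\ref{line:getreq-value}}, the value with the largest timestamp $\le \tsvar$ in its $\store^{\pvar}_{\ivar}$. Applying \inv~\ref{inv:cgst-updates} at that moment, I can conclude that the sets of updates with timestamps $\le \tsvar$ stored at any two correct replicas in the quorum coincide, and hence these replicas return identical $\getack$s carrying the same version $\vvvar^{*}$ with $\tsof(\vvvar^{*}) \le \tsvar$.

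The crucial bridge is then to argue that the value $\vvar$ ultimately returned by the client (line~\code{\ref{alg:client}}{\ref{line:get-val}}) is precisely $\vvvar^{*}$: since the quorum $Q$ has $2f+1$ members of which at least $f+1$ are correct and all agree on $\vvvar^{*}$, this is necessarily the majority element of $\M$. Finally, by the definition of $\rel{\rf}$, the value returned by $\rvar$ is the version produced by $\wvar$, and because versions are signed by the issuing client (line~\code{\ref{alg:replica}}{\ref{line:putreq-vv}}), a Byzantine replica cannot fabricate an alternative version that impersonates $\wvar$. Thus the timestamp embedded in the returned version is exactly $\tsof(\wvar)$, yielding $\tsof(\wvar) \le \tsvar = \tsof(\rvar)$.

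The main obstacle I anticipate is the majority step: one must be careful that \inv~\ref{inv:cgst-updates} really does apply uniformly to every correct replica in $Q$ at the time its $\getack$ is computed, rather than at a single global snapshot. I would handle this by fixing the relevant timepoint per replica to be the instant it executes line~\code{\ref{alg:replica}}{\ref{line:getreq-value}}, observing that its $\cgst^{\pvar}_{\ivar} \ge \tsvar$ holds at that point, and then invoking \inv~\ref{inv:cgst-updates} pairwise with the $\cgstvar = \tsvar$ to pin down a common set of $\le \tsvar$-versions, which is what licenses the uniqueness of $\vvvar^{*}$ across all correct replicas in $Q$.
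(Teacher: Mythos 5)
Your argument is correct and rests on the same core idea as the paper's one-sentence proof: the replica-side read rule at line~\code{\ref{alg:replica}}{\ref{line:getreq-value}} only returns a version with timestamp $\le \tsvar = \tsof(\rvar)$, and since $\rvar$ reads the version written by $\wvar$, this immediately gives $\tsof(\wvar) \le \tsof(\rvar)$. You additionally flesh out the quorum-majority step and the appeal to \inv~(\ref{inv:cgst-updates}) (plus signatures on versioned values) to show this timestamp bound actually survives Byzantine replicas and propagates to the value the \emph{client} returns — detail the paper elides but which is sound and genuinely needed for rigor in the Byzantine setting.
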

\begin{proof} \label{proof:read-from}
  By the read rule at line~\code{\ref{alg:replica}}{\ref{line:get-val}}.
\end{proof}

\begin{lemma} \label{lemma:hb-ts}
  \[
    \ovar \leadsto \ovar' \land o \in \O \land o' \in \O
      \implies \tsof(\ovar') \ge \tsof(\ovar).
  \]
\end{lemma}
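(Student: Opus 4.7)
The plan is to prove Lemma~\ref{lemma:hb-ts} by structural induction on the derivation of the happens-before relation $\ovar \leadsto \ovar'$, as defined by the three rules in Section~\ref{section:cc} (session order, read-from, and transitivity). The two base cases are already fully handled by the preceding lemmas, and the inductive step for transitivity reduces to the transitivity of the $\ge$ ordering on timestamps.

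More concretely, I would proceed as follows. First, consider the case where $\ovar \leadsto \ovar'$ is derived by the session-order rule, i.e., $\ovar \rel{\so} \ovar'$. Lemma~\ref{lemma:so} immediately gives $\tsof(\ovar') \ge \tsof(\ovar)$. Second, consider the case where $\ovar \leadsto \ovar'$ is derived by the read-from rule; here $\ovar \in \W$ and $\ovar' \in \R$, and Lemma~\ref{lemma:read-from} yields $\tsof(\ovar') \ge \tsof(\ovar)$. Third, in the transitivity case, there is an intermediate operation $g \in \O$ with $\ovar \leadsto g$ and $g \leadsto \ovar'$; applying the induction hypothesis to both sub-derivations gives $\tsof(g) \ge \tsof(\ovar)$ and $\tsof(\ovar') \ge \tsof(g)$, and chaining these inequalities yields the desired conclusion.

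There is no real obstacle here: all the semantic content (namely, that each atomic causal step cannot decrease the timestamp) has already been established in Lemmas~\ref{lemma:so-w-w}--\ref{lemma:read-from}, and the current lemma is essentially the closure of those atomic facts under transitivity. The only thing worth being careful about is making sure that the intermediate operation $g$ in the transitivity rule is indeed an operation in $\O$ (so that the inductive hypothesis applies to both $\ovar \leadsto g$ and $g \leadsto \ovar'$), which is guaranteed by the definition of $\leadsto$ since the rules only relate operation requests. Thus the proof is a short induction with no new proof obligations beyond citing Lemmas~\ref{lemma:so} and~\ref{lemma:read-from}.
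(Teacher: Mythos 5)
Your proposal is correct and follows essentially the same route as the paper: the paper's own proof is the one-liner ``By Lemmas~\ref{lemma:so} and~\ref{lemma:read-from},'' which implicitly relies on exactly the structural induction over the derivation of $\leadsto$ that you make explicit. You have simply unfolded the terse citation into the two base cases plus the transitivity step, which is a reasonable and faithful elaboration rather than a different argument.
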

\begin{proof} \label{proof:hb-ts}
  By Lemmas~\ref{lemma:so} and~\ref{lemma:read-from}.
\end{proof}

\begin{lemma} \label{lemma:local-cc}
  Consider $\rvar \in \R$ and $\wvar \in \W$.
  Suppose $\rvar$ reads from some value
  at a correct replica $\replica{\pvar}{\dvar}$
  at time $\timepoint$
  (line~\code{\ref{alg:replica}}{\ref{line:getreq-value}}).
  If $\wvar$ would be added to $\store^{\pvar}_{\dvar}$
  at a later time than $\timepoint$
  (line~\code{\ref{alg:replica}}{\ref{line:putreq-store}})
  then $\tsof(\wvar) > \tsof(\rvar)$.
\end{lemma}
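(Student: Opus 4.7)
The plan is to combine the blocking \textbf{wait} in the \getreq{} handler with \byzrule~\ref{rule:put} which prevents a correct replica from ever accepting a write whose timestamp has already been subsumed by its $\cgst$. Let $\timepoint' > \timepoint$ denote the time at which $\wvar$'s versioned value is added to $\store^{\pvar}_{\dvar}$ at line~\code{\ref{alg:replica}}{\ref{line:putreq-store}}, and let $\tsof(\rvar)$ be the $\tsvar$ that the read carried in its \getreq{} request, per Definition~\ref{def:ts}.

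First, I would extract the inequality $(\cgst^{\pvar}_{\dvar})_{\timepoint} \ge \tsof(\rvar)$. Since $\replica{\pvar}{\dvar}$ is correct and computed the value returned to $\rvar$ at line~\code{\ref{alg:replica}}{\ref{line:getreq-value}} at time $\timepoint$, it had to pass the blocking check at line~\code{\ref{alg:replica}}{\ref{line:getreq-wait-until}}, which is precisely the algorithmic realisation of \byzrule~\ref{rule:get-ts}. Second, I would note that $\cgst^{\pvar}_{\dvar}$ is monotonically non-decreasing at a correct replica, since the only write to this variable is the $\max$-update at line~\code{\ref{alg:cgst}}{\ref{line:commit-cgst}}. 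Hence $(\cgst^{\pvar}_{\dvar})_{\timepoint'} \ge (\cgst^{\pvar}_{\dvar})_{\timepoint} \ge \tsof(\rvar)$.

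Third, I would invoke \byzrule~\ref{rule:put} at time $\timepoint'$: a correct replica $\replica{\pvar}{\dvar}$ rejects any update whose timestamp is $\le \cgst^{\pvar}_{\dvar}$. Concretely, the precondition at line~\code{\ref{alg:replica}}{\ref{line:putreq-pre}} together with the interplay between $\lst^{\pvar}_{\dvar}$, $\currcgst^{\pvar}_{\dvar}$, and $\cgst^{\pvar}_{\dvar}$ (namely that $\currcgst \ge \cgst$ at a correct replica, and the precondition rejects anything not strictly above $\currcgst$) gives $\tsof(\wvar) > (\cgst^{\pvar}_{\dvar})_{\timepoint'}$. Chaining the three displayed facts yields $\tsof(\wvar) > \tsof(\rvar)$, which is what the lemma asserts.

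The main obstacle will be cleanly justifying the third step: the pseudocode at line~\code{\ref{alg:replica}}{\ref{line:putreq-pre}} is literally $\utvar \ge \lst$, yet \byzrule~\ref{rule:put} and the surrounding prose speak in terms of $\cgst^{\pvar}_{\dvar}$ and require strict inequality. I would handle this by appealing to \byzrule~\ref{rule:put} (which \emph{is} the maintained protocol invariant, per Lemma~\ref{lemma:rules}) rather than to the textual form of the precondition, thus sidestepping the variable-name discrepancy and keeping the argument self-contained within the invariants already established in this appendix.
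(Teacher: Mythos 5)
Your proof is correct and follows essentially the same route as the paper's: the blocking wait at line~\code{\ref{alg:replica}}{\ref{line:getreq-wait-until}} gives $(\cgst^{\pvar}_{\dvar})_{\timepoint} \ge \tsof(\rvar)$, then \byzrule~\ref{rule:put} gives $\tsof(\wvar) > (\cgst^{\pvar}_{\dvar})_{\timepoint}$, and chaining yields the claim. Your extra care about the monotonicity of $\cgst^{\pvar}_{\dvar}$ and the $\lst$ versus $\cgst$ discrepancy at line~\code{\ref{alg:replica}}{\ref{line:putreq-pre}} are small but genuine clarifications the paper glosses over, and your citation of \byzrule~\ref{rule:get-ts} for the first inequality is arguably more accurate than the paper's citation of \byzrule~\ref{rule:get}.
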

\begin{proof} \label{proof:local-cc}
  By \byzrule~\ref{rule:get},
  \[
    (\cgst^{\pvar}_{\dvar})_{\timepoint} \ge \tsof(\rvar).
  \]
  By \inv~\ref{rule:put},
  \[
    \tsof(\wvar) > (\cgst^{\pvar}_{\dvar})_{\timepoint}.
  \]
  Thus,
  \[
    \tsof(\wvar) > \tsof(\rvar).
  \]
\end{proof}

\begin{lemma} \label{lemma:partition-cc}
  Consider $\rvar \in \R$ and $\wvar \in \W$.
  Suppose the successful $\rvar$ returns at time $\timepoint$
  (line~\code{\ref{alg:replica}}{\ref{line:getreq-value}})
  and the successful
  $\wvar$ starts at a later time than $\timepoint$ in partition $\pvar$.
  Then $\lnot(\wvar \leadsto \rvar$).
\end{lemma}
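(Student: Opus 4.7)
The plan is to argue by contradiction, using Lemma~\ref{lemma:hb-ts} to reduce the causal claim to a timestamp inequality and then using Lemma~\ref{lemma:local-cc} applied to a single witness replica to establish the opposite inequality. Suppose toward contradiction that $\wvar \leadsto \rvar$; by Lemma~\ref{lemma:hb-ts} this forces $\tsof(\rvar) \ge \tsof(\wvar)$, so it suffices to exhibit one correct replica of $\pvar$ witnessing both operations in a way that makes Lemma~\ref{lemma:local-cc} applicable and thus yields the strict opposite $\tsof(\wvar) > \tsof(\rvar)$.

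To produce such a witness I would use quorum intersection inside $\pvar$. Since $\rvar$ returns successfully at time $\timepoint$, its client received $\getack$ messages from some quorum $Q_r$ of $2f + 1$ replicas in $\replicas(\pvar)$; since $\wvar$ is successful and starts at some $\timepoint_w > \timepoint$, its client received $\putack$ from some quorum $Q_w$ of $2f + 1$ replicas in $\replicas(\pvar)$, each correct one of which executed $\wvar$ via lines~\code{\ref{alg:replica}}{\ref{line:putreq-store}}--\code{\ref{alg:replica}}{\ref{line:putreq-send-putack}}. The intersection satisfies $|Q_r \cap Q_w| \ge 2(2f+1) - (3f+1) = f + 1$ in a pool of $3f + 1$ replicas with at most $f$ Byzantine ones, so it must contain some correct $\replica{\pvar}{\dvar}$. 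This $\replica{\pvar}{\dvar}$ processed $\rvar$ at some time $t_r \le \timepoint$ (matching the premise of Lemma~\ref{lemma:local-cc}) and added $\wvar$ to $\store^{\pvar}_{\dvar}$ at some time $t_w \ge \timepoint_w > \timepoint \ge t_r$ (matching its ``later time'' hypothesis). Lemma~\ref{lemma:local-cc} then gives $\tsof(\wvar) > \tsof(\rvar)$, contradicting $\tsof(\rvar) \ge \tsof(\wvar)$.

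The main obstacle I anticipate is the precise handling of the temporal relations glossed over in the previous paragraph. I must justify that a correct replica counted in $Q_w$ genuinely added $\wvar$ to its store strictly after $\timepoint_w$: this follows because acceptance at line~\code{\ref{alg:replica}}{\ref{line:putreq-store}} can only occur upon delivery of the $\putreq$ that the client transmits at $\timepoint_w$. Symmetrically, I must justify $t_r \le \timepoint$, which is immediate because the client already received this replica's $\getack$ by $\timepoint$. Once these two real-time bounds are in place the remainder is the routine $(2f+1)$-quorum accounting in a pool of $3f + 1$ replicas.
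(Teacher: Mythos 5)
Your proof is correct and takes essentially the same route as the paper's: find a correct replica witnessing both $\rvar$ and $\wvar$ (via quorum intersection among the $2f+1$-sized $\getack$ and $\putack$ quorums in the pool of $3f+1$ replicas), apply Lemma~\ref{lemma:local-cc} to obtain $\tsof(\wvar) > \tsof(\rvar)$, and then invoke Lemma~\ref{lemma:hb-ts} to exclude $\wvar \leadsto \rvar$. The only stylistic differences are that you phrase it as a contradiction whereas the paper uses the contrapositive directly, and that you spell out the quorum-intersection arithmetic and the real-time bookkeeping that the paper leaves implicit.
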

\begin{proof} \label{proof:partition-causality}
  By lines~\ref{alg:client}{\ref{line:get-wait-receive-getack}}
  and \ref{alg:client}{\ref{line:put-wait-receive-putack}},
  there is a \emph{correct} replica at which
  $\rvar$ obtains its value
  (line~\code{\ref{alg:replica}}{\ref{line:getreq-value}})
  \emph{before} $\wvar$ is added to the store
  (line~\code{\ref{alg:replica}}{\ref{line:putreq-store}}).
  By Lemma~\ref{lemma:local-cc},
  \[
    \tsof(\wvar) > \tsof(\rvar).
  \]
  By Lemma~\ref{lemma:hb-ts},
  \[
    \lnot(\wvar \leadsto \rvar).
  \]
\end{proof}

\begin{theorem} \label{thm:byzcc}
  \emph{\byzgentlerain} satisfies \byzcc{}.
  That is, when a certain \put{} operation is visible to a client,
  then so are all of its causal dependencies.
\end{theorem}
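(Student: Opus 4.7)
The plan is to show that for any correct client $c$, if a put $w$ is visible to $c$ (evidenced by a get $r$ of $c$ returning $w$'s version), then every causal dependency $w'$ of $w$ is also visible to $c$, meaning every subsequent get $r'$ of $c$ on the key $k'$ updated by $w'$ returns $w'$'s version or a later version on $k'$.

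First, I would establish a timestamp chain linking the hidden dependency $w'$ to the client's subsequent read $r'$. From $w' \leadsto w$, $w \rel{\rf} r$, and $r \rel{\so} r'$, transitivity of $\leadsto$ yields $w' \leadsto r'$, so Lemma~\ref{lemma:hb-ts} gives $\tsof(r') \ge \tsof(w')$. This is the crucial handle that brings $w'$ into scope when $r'$ executes, even though $c$ is unaware of $w'$.

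Next, I would analyze what a correct replica $\replica{p'}{d'}$ of $\partition(k')$ does when serving $r'$: by \byzrule~\ref{rule:get-ts}, implemented at line~\code{\ref{alg:replica}}{\ref{line:getreq-wait-until}}, it stalls until $\cgst^{p'}_{d'} \ge \tsof(r') \ge \tsof(w')$ and then returns the value of $k'$ with the largest timestamp $\le \tsof(r')$ in its $\store^{p'}_{d'}$ (line~\code{\ref{alg:replica}}{\ref{line:getreq-value}}). Combined with \inv~\ref{inv:cgst-updates}, all $f+1$ correct replicas in the quorum serving $r'$ agree on exactly the set of updates with timestamps $\le \cgst^{p'}_{d'}$, so they return the same value, and the majority rule at line~\code{\ref{alg:client}}{\ref{line:get-val}} ensures $c$ actually receives it.

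The main obstacle, and where I expect to do the real work, is showing that at the moment $\cgst^{p'}_{d'}$ first reaches $\ge \tsof(w')$, the update $w'$ itself is already present in the stabilized set at every correct replica of $\partition(k')$. For this I would analyze the PBFT synchronization in Algorithm~\ref{alg:cgst}: the partition leader collects updates $\le \cgstvar$ from a quorum via \collect/\collectack, and the committed set is the union of what the quorum reports (line~\code{\ref{alg:cgst}}{\ref{line:commit-storevar}}). Because $w' \leadsto w$ and $w$ was successfully acknowledged, the put that produced $w'$ must have completed, so $w'$ was stored by a put-quorum of $2f+1$ replicas; quorum intersection with the collect-quorum of $2f+1$ in a system of $3f+1$ yields at least one correct replica common to both, which must report $w'$. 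Hence $w'$ appears in the committed store and is installed at every correct replica (line~\code{\ref{alg:cgst}}{\ref{line:commit-store}}) before $\cgst^{p'}_{d'}$ advances past $\tsof(w')$, closing the argument.
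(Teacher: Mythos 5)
Your proof is correct in structure and, in fact, more complete than what the paper writes down. The paper reduces the theorem to two lemmas (partition-cc and cross-partition-cc), which establish that a read $r$ cannot causally depend on any write that started after $r$ returned. Via the chain of lemmas so-w-w through hb-ts, the paper first shows that timestamps respect causality, and then argues locally (local-cc, partition-cc) and cross-partition (cross-partition-cc) that a later-starting write must have a strictly larger timestamp than a completed read, so it cannot lie in the read's causal past. The theorem is then declared to follow directly from those two lemmas.

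You go further: you observe that the paper's ordering lemmas alone give the contrapositive statement (causal dependencies of $r$ started in the past) but do not directly establish that the dependency $w'$ is actually installed and returnable when the client subsequently reads its key. You close this gap with the quorum-intersection argument on the \collect phase: the set of $2f+1$ replicas that acknowledged $w'$ and the set of $2f+1$ replicas answering \collect must share at least one correct replica, which reports $w'$ in its \collectack, so $w'$ enters the committed store and is installed at every correct replica by the time $\cgst$ passes $\tsof(w')$. This is essentially the content of the paper's Lemma~\ref{lemma:w-successful}, which the paper proves separately but does \emph{not} cite in the theorem's proof; you have effectively inlined it where it belongs. Combined with \inv~\ref{inv:cgst-updates} for quorum agreement and \byzrule~\ref{rule:get-ts} for the wait-until-stable, your chain is complete and arguably tighter than the paper's.

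One small imprecision worth flagging: you assert ``Because $w' \leadsto w$ and $w$ was successfully acknowledged, the put that produced $w'$ must have completed.'' This is not literally forced when $w'$ is issued by a Byzantine client (the read-from rule permits Byzantine writers). The cleaner formulation is that the first $\leadsto$-edge out of $w'$ is either $\so$ (then the issuing client is correct and waited for a put-quorum) or $\rf$ (then a correct reader returned $w'$, which by \byzrule~\ref{rule:get} and \inv~\ref{inv:cgst-updates} means $w'$ was already in the stabilized store of every correct replica). Either way $w'$ is durably present before $\cgst$ overtakes it, and your quorum-intersection step goes through. With that adjustment the argument is sound.
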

\begin{proof} \label{proof:byzcc}
  By Lemmas~\ref{lemma:partition-cc}
  and \ref{lemma:cross-partition-cc}.
\end{proof}

\begin{lemma} \label{lemma:w-successful}
  Suppose a \put{} operation $\wvar$ successfully
  returns in partition $\pvar$ at time $\timepoint$
  (line~\code{\ref{alg:client}}{\ref{line:put-wait-receive-putack}}).
  Then, it will eventually be in $\store^{\pvar}_{\ivar}$
  for each correct data center $\ivar$.
\end{lemma}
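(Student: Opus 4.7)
The plan is to split the argument into two parts: (P1) showing that the versioned value $\vvvar$ created for $\wvar$ is eventually delivered to every correct $\replica{\pvar}{\ivar}$, and (P2) showing that once installed in such a store, $\vvvar$ can never be permanently removed. For (P1), since $\wvar$ successfully returns at time $\timepoint$, the client has received $\putack$ from a quorum $Q$ of $2f+1$ replicas in $\replicas(\pvar)$; at least $f+1$ of them are correct, and each has both added $\vvvar$ to its local store at line~\code{\ref{alg:replica}}{\ref{line:putreq-store}} and broadcast $\replicate(\vvvar)$ at line~\code{\ref{alg:replica}}{\ref{line:putreq-send-replicate}}. Under the eventual-delivery assumption of Section~\ref{ss:failure-model} on correct channels, each correct $\replica{\pvar}{\ivar}$ eventually receives one of these $\replicate$ messages and installs $\vvvar$ at line~\code{\ref{alg:metadata}}{\ref{line:replicate-store}}.

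For (P2), the only point in the pseudocode where $\vvvar$ might be discarded from the store of a correct replica is line~\code{\ref{alg:cgst}}{\ref{line:commit-store}}, which overwrites $\store$ with $\storevar \cup \set{u \in \store \mid u.\utvar > \currcgst}$. The case $\currcgst < \tsof(\wvar)$ is trivial because $\vvvar$ is preserved in the second set, so the substantive obligation is to prove that $\vvvar \in \storevar$ whenever a correct replica commits with $\currcgst \ge \tsof(\wvar)$.

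I expect this last step to be the main obstacle, and the plan is to attack it by a quorum-intersection argument woven together with the stabilization preconditions. The committed $\storevar$ is the union (line~\code{\ref{alg:cgst}}{\ref{line:commit-storevar}}) of the $\stvar_i$'s drawn from the $\collectack$ responses forming the PBFT-agreed collect quorum $Q'$ of size $2f+1$. Since $|Q \cap Q'| \ge f+1$ in a partition of $3f+1$ replicas, the intersection contains at least one correct replica $r^*$. Rule~\ref{rule:put}, enforced by the precondition at line~\code{\ref{alg:replica}}{\ref{line:putreq-pre}}, together with the collect precondition at line~\code{\ref{alg:cgst}}{\ref{line:collect-pre}} and the monotonicity of $\currcgst^{r^*}$, forces $r^*$ to have processed the putreq of $\wvar$ strictly before responding to any $\collect$ carrying $\currcgst \ge \tsof(\wvar)$. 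Therefore $\vvvar \in \store^{r^*}$ at the moment $r^*$ computes $\stvar^{r^*} = \set{u \in \store \mid u.\utvar \le \currcgst}$ at line~\code{\ref{alg:cgst}}{\ref{line:collect-stvar}}, so $\vvvar \in \stvar^{r^*} \subseteq \storevar$. The apparent circularity, namely that persistence at $r^*$ itself depends on earlier commits not having erased $\vvvar$, is resolved by inducting on the globally ordered sequence of commit events satisfying $\currcgst \ge \tsof(\wvar)$: at the first such event $r^*$ trivially still holds $\vvvar$, and the inductive step reuses the same quorum-intersection reasoning. Combining (P1) and (P2) yields the lemma.
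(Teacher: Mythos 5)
Your proof is correct, and it takes a genuinely different route from the paper's. The paper's argument uses a single step: intersect the $\putack$ quorum $Q$ with the $\collectack$ quorum $Q'$ from some PBFT round, conclude that a correct replica $r^* \in Q \cap Q'$ contributes $\wvar$ to the collected $\storevar$, and then invoke PBFT agreement together with line~\code{\ref{alg:cgst}}{\ref{line:commit-store}} to \emph{disseminate} $\wvar$ to all correct replicas. You instead split the claim: dissemination in (P1) is obtained for free from the $\replicate$ broadcast at line~\code{\ref{alg:replica}}{\ref{line:putreq-send-replicate}} and the eventual-delivery assumption, without touching PBFT at all; and you reserve the quorum-intersection argument for (P2), the \emph{persistence} question --- that the only destructive statement, line~\code{\ref{alg:cgst}}{\ref{line:commit-store}}, never erases $\wvar$ from a correct replica's store. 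This decomposition buys two things the paper's sketch glosses over. First, it makes explicit \emph{why} the intersecting correct replica $r^*$ must have processed $\putreq(\wvar)$ before answering $\collect(\cgstvar)$ with $\cgstvar \ge \tsof(\wvar)$ (monotonicity of $\currcgst$ plus \byzrule~\ref{rule:put}), which the paper simply asserts. Second, it surfaces and resolves, by induction on successive $\cgstvar$ values, the fact that persistence at $r^*$ itself relies on $\wvar$ not having been erased by an earlier commit --- a real circularity that the paper's one-shot appeal to ``the correctness of single-shot PBFT'' does not visibly address. The paper's proof is shorter because it leans more heavily on the PBFT black box; yours is longer but more self-contained and closer to a checkable argument. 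The only soft spot in your writeup is the phrase ``globally ordered sequence of commit events'': commit events at different replicas are concurrent, so the induction should really be stated on the strictly increasing sequence of $\cgstvar$ values (well-ordered by lines~\code{\ref{alg:cgst}}{\ref{line:newcgst-pre}} and \code{\ref{alg:cgst}}{\ref{line:collect-pre}}), relying on PBFT agreement to ensure that all correct replicas committing a given $\cgstvar$ adopt the same $\storevar$.
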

\begin{proof} \label{proof:w-successful}
  By line~\code{\ref{alg:client}}{\ref{line:put-wait-receive-putack}}
  and line~\code{\ref{alg:cgst}}{\ref{line:newleader-wait-receive-collectack}},
  there is a \emph{correct} replica in partition $\pvar$ at which
  $\wvar$ is added to the store
  (line~\code{\ref{alg:replica}}{\ref{line:putreq-store}})
  before it is sent to the PBFT leader in the $\collectack$ message
  (line~\code{\ref{alg:cgst}}{\ref{line:collect-send-collectack}}).
  By the correctness of single-shot PBFT~\cite{ByzLive:DISC20},
  it will eventually be in $\store^{\pvar}_{\ivar}$
  for each correct data center $\ivar$.
\end{proof}

\begin{lemma} \label{lemma:update-visible}
  Let $\wvar$ be a successful \put{} operation.
  Then, eventually for each correct replica
  $\replica{\jvar}{\ivar}$ ($1 \le \ivar \le \D, 1 \le \jvar \le \P$),
  $\cgst^{\jvar}_{\ivar} \ge \tsof(\wvar)$.
\end{lemma}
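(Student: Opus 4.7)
The plan is to push $\tsof(\wvar)$ up the metadata stack layer by layer: from physical clocks to version vectors, then to $\lst$, then to $\LV$ and $\gst$, and finally through a PBFT round to $\cgst$. First, I would observe that $\tsof(\wvar) = \clock_c$ for the correct issuing client $c$ at the moment of the \put{}, and since all correct nodes have monotonically increasing physical clocks loosely synchronized by NTP, eventually every correct replica's local clock exceeds $\tsof(\wvar)$.

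Second, I would argue that for every partition $\pvar$ and every correct replica $\replica{\pvar}{\dvar}$, eventually $\lst^{\pvar}_{\dvar} \ge \tsof(\wvar)$. The periodic \heartbeat{} mechanism (Algorithm~\ref{alg:metadata}) causes every correct $\replica{\pvar}{e}$ to send a $\hb$ message carrying a clock value eventually exceeding $\tsof(\wvar)$; since each partition contains at least $2f+1$ correct replicas, eventually $2f+1$ entries of $\VV^{\pvar}_{\dvar}$ exceed $\tsof(\wvar)$, so the $(f+1)$-st minimum, which is $\lst^{\pvar}_{\dvar}$ (line~\ref{line:broadcast-lst}), also does. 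Then, via the periodic \broadcast{} of $\lst$ to $\datacenter(\dvar)$ (line~\ref{line:broadcast-send-bc}), each correct $\replica{\pvar}{\dvar}$ eventually sees $\LV^{\pvar}_{\dvar}[\jvar] \ge \tsof(\wvar)$ for every $\jvar$, hence $\gst^{\pvar}_{\dvar} \ge \tsof(\wvar)$. This triggers the guarded send of $\newcgst(\gst^{\pvar}_{\dvar})$ (line~\ref{line:bc-send-newcgst}), initiating a fresh PBFT round.

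Finally, I would appeal to the liveness of single-shot PBFT~\cite{ByzLive:DISC20}: view changes driven by timeouts will eventually install a correct leader $\replica{\pvar}{\dvar}$, whose own $\lst^{\pvar}_{\dvar}$ is by then $\ge \tsof(\wvar)$. When that leader collects a quorum of \newleader{} messages (line~\ref{line:procedure-newleader}), at least $f+1$ of them come from correct replicas whose $\currcgst$ is $\ge \tsof(\wvar)$ (by the previous layer), so the leader's assignment $\currcgst \gets \max\{\cgstvar_i \mid \replica{\pvar}{i} \in Q \land \cgstvar_i \le \lst^{\pvar}_{\dvar}\}$ at line~\ref{line:newleader-currcgst} yields $\currcgst \ge \tsof(\wvar)$. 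Following the \collect/\propose/\prepared/\commit{} cycle, every correct replica eventually sets $\cgst^{\pvar}_{\dvar} \ge \tsof(\wvar)$ at line~\ref{line:commit-cgst}, completing the argument for partition $\pvar$; applying the same reasoning partition-wise yields the result for all correct $\replica{\jvar}{\ivar}$. The main obstacle is the PBFT step: a Byzantine leader may keep proposing a $\cgstvar$ that is too small or stall the round, so the argument really needs the standard PBFT liveness property (repeated view changes, eventual synchrony) together with the observation that once the correct-replica \emph{floor} $\lst$ has risen above $\tsof(\wvar)$, any completed round from then on commits a $\cgstvar \ge \tsof(\wvar)$.
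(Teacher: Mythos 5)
Your proof takes the same approach as the paper---push $\tsof(\wvar)$ up through $\VV \to \lst \to \LV \to \gst$ and then through a PBFT round to $\cgst$---but it is a careful expansion of the paper's three-sentence sketch, and it is worth noting that the step you single out as the ``main obstacle'' (a Byzantine leader stalling or under-proposing, resolved by standard PBFT liveness and repeated view changes until a correct leader's $\lst$ has risen past $\tsof(\wvar)$) is exactly what the paper's proof silently glosses over with ``By Algorithm~\ref{alg:cgst}, $\cgst^{\jvar}_{\ivar} \ge \tsof(\wvar)$.'' One small point: appealing to NTP synchronization to conclude that all correct replicas' clocks eventually exceed $\tsof(\wvar)$ is not the right justification---the paper explicitly states that correctness does not depend on clock synchronization. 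The cleaner route is the one the paper hints at: since $\wvar$ succeeded, at least $f+1$ correct replicas executed the wait $\clock \ge \utvar$ and therefore reached clock value $\tsof(\wvar)$, which shows $\tsof(\wvar)$ is attainable, and then monotone, ever-advancing physical clocks at all correct replicas (together with the periodic \heartbeat{}/\broadcast{} propagation you describe) deliver the rest.
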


\begin{proof}
  Suppose $\wvar$ successfully returns in partition $\pvar$.
  Then it is added to the stores of
  at least $f+1$ \emph{correct} replicas in $\replicas(\pvar)$.
  By Algorithm~\ref{alg:metadata},
  eventually for each correct replica $\replica{\jvar}{\ivar}$,
  $\lst^{\jvar}_{\ivar} \ge \tsof(\wvar)$ and
  $\gst^{\jvar}_{\ivar} \ge \tsof(\wvar)$.
  By Algorithm~\ref{alg:cgst},
  $\cgst^{\jvar}_{\ivar} \ge \tsof(\wvar)$.
\end{proof}

\begin{lemma}
  Suppose a correct replica sends a $\newcgst(\gstvar)$ message
  (line~\code{\ref{alg:metadata}}{\ref{line:bc-send-newcgst}}).
  Then, there is a $\cgstvar \ge \gstvar$ such that
  eventually for each correct replica $\replica{\jvar}{\ivar}$
  ($1 \le \ivar \le \D, 1 \le \jvar \le \P$),
  $\cgst^{\jvar}_{\ivar} \ge \cgstvar$.
\end{lemma}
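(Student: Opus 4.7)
The plan is to take $\cgstvar = \gstvar$ itself as the witness and show that every correct replica in every partition eventually advances its $\cgst$ to at least $\gstvar$. The argument decomposes into three stages: propagating the information that $\gstvar$ is a lower bound on local stable times, using this to trigger the per-partition PBFT instances, and appealing to PBFT liveness.

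For the first stage, I would observe that when $\replica{\pvar}{\dvar}$ sends $\newcgst(\gstvar)$, it must have $\gstvar = \min_{1 \le k \le \P} \LV^{\pvar}_{\dvar}[k]$, and each $\LV^{\pvar}_{\dvar}[k]$ was supplied by a $\broadcast(\lst^{k}_{\dvar})$ message from $\replica{k}{\dvar}$. Hence for every partition $k$, the value of $\lst^{k}_{\dvar}$ reached at least $\gstvar$ at some earlier moment. By reliable delivery of replicate and heartbeat messages between correct replicas (Algorithm~\ref{alg:metadata}), the version vectors at every correct $\replica{\jvar}{\ivar}$ grow so that their $(f+1)$-st minimum eventually exceeds $\gstvar$. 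Monotonicity of $\lst$ at line~\code{\ref{alg:metadata}}{\ref{line:broadcast-lst}}, together with repeated intra-data-center broadcasts, then pushes every entry of $\LV^{\jvar}_{\ivar}$ above $\gstvar$, so $\gst^{\jvar}_{\ivar} \ge \gstvar$ holds eventually at every correct replica.

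For the second and third stages, once $\gst^{\jvar}_{\ivar}$ surpasses the current $\cgst^{\jvar}_{\ivar}$ the replica dispatches a fresh $\newcgst$ message at line~\code{\ref{alg:metadata}}{\ref{line:bc-send-newcgst}}, so the precondition $\gstvar' \le \lst \land \gstvar' > \currcgst$ at line~\code{\ref{alg:cgst}}{\ref{line:newcgst-pre}} is met at some $\gstvar' \ge \gstvar$ in each correct replica of partition $\jvar$, triggering $\newview$. By the liveness of single-shot PBFT from~\cite{ByzLive:DISC20}, the instance for partition $\jvar$ eventually commits some $\cgstvar_{\jvar} \ge \gstvar$, at which point every correct $\replica{\jvar}{\ivar}$ sets $\cgst^{\jvar}_{\ivar} \ge \cgstvar_{\jvar} \ge \gstvar$ via line~\code{\ref{alg:cgst}}{\ref{line:commit-cgst}}. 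Since this holds partition-by-partition, the uniform witness $\cgstvar = \gstvar$ suffices.

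The step I expect to be most delicate is harmonising the per-partition PBFT progress with the stream of $\newcgst$ messages carrying ever-larger proposals, since every fresh $\newcgst$ can trigger a new view and potentially starve an in-flight instance. One must argue there is eventually a quiescent window in which no higher proposal interrupts long enough for a quorum of $\newleader$ messages meeting $\ValidNewLeader$ and $\safecollect$ to be assembled and a $\propose$/$\prepared$/$\commit$ round to complete. This is precisely the PBFT view-change liveness argument that~\cite{ByzLive:DISC20} establishes, and I would cite it once the stabilisation of the precondition $\gstvar' \le \lst$ is shown above.
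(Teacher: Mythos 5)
Your proof is correct but takes a genuinely different route from the paper's. You argue positively by tracing the propagation mechanism from start to finish: the initial $\newcgst$ implies a lower bound on local stable times, heartbeat-driven growth of version vectors pushes $\lst$ and $\gst$ past $\gstvar$ at every correct replica, this triggers $\newcgst$/$\newview$ in each partition, and PBFT liveness closes the loop by actually committing a $\cgst^{\jvar}_{\ivar} \ge \gstvar$ via line~\code{\ref{alg:cgst}}{\ref{line:commit-cgst}}. The paper's proof instead argues negatively that a Byzantine replica's $\newcgst$ proposal cannot block progress: a bogus $\gstvar$ is admitted only when $\le \lst$, a too-small one gets superseded, and eventually a correct replica broadcasts a sufficiently large $\gst$. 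Your version is the more complete of the two: the paper talks about $\currcgst$ being \emph{overwritten}, but $\currcgst$ is only the staging variable during PBFT; the actual $\cgst$ that the lemma is about changes only after a $\commit$ quorum, a step the paper's text does not mention but yours does. Your concluding paragraph correctly identifies the real liveness subtlety (new $\newcgst$ messages repeatedly bumping the view) and correctly defers it to the PBFT liveness argument of~\cite{ByzLive:DISC20}, which is the same assumption the paper leans on.

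One small caveat in your Stage~1: you infer that each $\lst^{k}_{\dvar}$ ``reached at least $\gstvar$'' from the $\broadcast$ messages recorded in $\LV^{\pvar}_{\dvar}$, but if $\replica{k}{\dvar}$ is Byzantine the reported $\lst$ need not reflect its true state. This does not actually damage the argument, since the eventual growth of $\lst^{\jvar}_{\ivar}$ past any fixed $\gstvar$ at every \emph{correct} replica follows directly from periodic heartbeats with monotonically increasing clocks (lines~\code{\ref{alg:metadata}}{\ref{line:heartbeat-send-hb}} and~\code{\ref{alg:metadata}}{\ref{line:hb-vv}}), independent of what $\replica{k}{\dvar}$ claimed earlier. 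Phrasing Stage~1 via heartbeat-driven growth alone would sidestep the Byzantine-sender concern and make the argument airtight.
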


\begin{proof}
  A byzantine replica can propose $gst^{\pvar}_{\dvar}$
  to replica $\replica{\pvar}{\dvar}$,
  but it will take effect only when
  $gst^{\pvar}_{\dvar} < lst^{\pvar}_{\dvar}$
  maintained by $\replica{\pvar}{\dvar}$.
  If $gst^{\pvar}_{\dvar}$ is smaller than
  a $gst'$ broadcast by a correct replica, then it will be overwritten at $\replica{\pvar}{\dvar}$.
  And eventually, there must be a $gst$ version $\ge \replica{\pvar}{\dvar}$ broadcast by a correct replica.
  Thus it will eventually by overwritten by a correct $gst$ version.
\end{proof}

\begin{lemma} \label{lemma:cross-partition-cc}
  Consider $\rvar \in \R$ and $\wvar \in \W$.
  Suppose the successful $\rvar$ returns at time $\timepoint$
  (line~\code{\ref{alg:replica}}{\ref{line:getreq-value}})
  and the successful
  $\wvar$ starts at a later time than $\timepoint$
  in any partition $\jvar \neq \pvar$.
  Then $\lnot(\wvar \leadsto \rvar$).
\end{lemma}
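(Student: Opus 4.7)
The plan is to argue by real-time induction along any causality chain, since the local-replica argument used in Lemma~\ref{lemma:partition-cc} is unavailable here: quorums drawn from $\replicas(\pvar)$ and $\replicas(\jvar)$ in distinct partitions need not share any replica, so no single correct replica witnesses both $\rvar$ and $\wvar$. Suppose for contradiction that $\wvar \leadsto \rvar$, and unfold this causality witness into a chain
\[
  \wvar = \ovar_0 \to \ovar_1 \to \cdots \to \ovar_k = \rvar,
\]
in which each link is either $\rel{\so}$ or $\rel{\rf}$. I would prove by induction on $j$ that the ``first effect'' of $\ovar_j$ occurs strictly after time $\timepoint$, where ``first effect'' denotes, for a \put, the earliest moment at which the signed versioned value of that \put{} appears at any replica, and for a \get{} by a correct client, the moment at which the client returns from the operation.

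The base case is exactly the hypothesis on $\ovar_0 = \wvar$. For the inductive step, the $\rel{\so}$ case is immediate: since both operations are by the same correct client, $\ovar_{j+1}$ is invoked only after $\ovar_j$ returns, hence after $\ovar_j$'s first effect. In the $\rel{\rf}$ case, $\ovar_{j+1}$ is a \get{} by a correct client that returns $\ovar_j$'s value; by the majority rule at line~\code{\ref{alg:client}}{\ref{line:get-val}}, of the $2f+1$ \getack{} responses at least $f+1$ carry $\ovar_j$'s value, so at least one correct replica did. Such a correct replica returns only a value it has previously installed in its $\store$ via line~\code{\ref{alg:replica}}{\ref{line:putreq-store}} or line~\code{\ref{alg:metadata}}{\ref{line:replicate-store}}, and any such installation requires prior receipt of the client-signed versioned value $\vvvar = \sign{\kvar, \vvar, \utvar, \cvar}{}{\cvar}$. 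Since $\vvvar$ cannot appear anywhere before $\ovar_j$'s first effect (by signature unforgeability), the induction hypothesis yields that the receipt at the correct replica, and hence its outgoing \getack{} and $\ovar_{j+1}$'s return, all occur at time $> \timepoint$.

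Applying the claim at $j = k$ gives that $\rvar$ returns strictly after $\timepoint$, contradicting the hypothesis that $\rvar$ returns at $\timepoint$; therefore $\lnot(\wvar \leadsto \rvar)$. The main subtlety I anticipate is in the $\rel{\rf}$ step: one must rule out a Byzantine replica somehow delivering $\ovar_j$'s signed versioned value to a correct replica before $\ovar_j$'s own client has produced it, whether directly or through a chain of \replicate{} messages. This is precisely where the digital-signature assumption is used---no Byzantine party can fabricate $\sign{\kvar, \vvar, \utvar, \cvar}{}{\cvar}$ without $\cvar$'s private key---so the real-time lower bound carried by ``first effect'' propagates intact through every link of the chain, uniformly across partitions.
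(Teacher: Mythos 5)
Your proposal is correct, but it takes a genuinely different route from the paper's. The paper's own proof is a timestamp argument: it derives $\tsof(\wvar) \ge \lst$ at a replica accepting $\wvar$ (line \code{\ref{alg:replica}}{\ref{line:putreq-pre}}), derives $\tsof(\rvar) \le \cgst$ at a replica answering $\rvar$ (line \code{\ref{alg:replica}}{\ref{line:getreq-wait-until}}), asserts that $\lst$ at a correct server upper-bounds $\cgst$ at all correct replicas, and concludes $\tsof(\wvar) > \tsof(\rvar)$, hence $\lnot(\wvar \leadsto \rvar)$ by the contrapositive of Lemma~\ref{lemma:hb-ts}. You instead use no timestamps at all, but a real-time induction along the causality chain, pushing the ``strictly after $\timepoint$'' bound through $\rel{\so}$ links by the serial behavior of a correct client and through $\rel{\rf}$ links by quorum intersection plus unforgeability of the client signature on the versioned value. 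Your route sidesteps the stability invariant relating $\lst$ to $\cgst$ across servers (the least rigorously justified step in the paper's own argument) and makes no use of the hypothesis $\jvar \neq \pvar$, so it in fact subsumes Lemma~\ref{lemma:partition-cc} as well; the paper's route is shorter because it reuses the already-established timestamp lemmas. One minor omission to fix: in the $\rel{\rf}$ step you list only lines \code{\ref{alg:replica}}{\ref{line:putreq-store}} and \code{\ref{alg:metadata}}{\ref{line:replicate-store}} as entry points for a versioned value into a correct replica's $\store$, but line \code{\ref{alg:cgst}}{\ref{line:commit-store}} also installs collected updates during $\cgst$ synchronization; since $\safecollect$ checks client signatures on that path, your unforgeability argument still covers it, but the case should be listed explicitly.
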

\begin{proof} \label{proof:across-partition-cc}
  By Algorithm~\ref{alg:replica}\ref{line:putreq-pre}, $\tsof{\wvar} > \lst$ 
  at each replca who accept $\wvar$. 
  By Algorithm~\ref{alg:replica}\ref{line:getreq-wait-until},
  $\tsof{\rvar} \le \cgst$ at each replica who reply it.
  Since $\lst$ at any correct correct server is an upbound of
  all the $\cgst$ at all correct replicas, $tsof(\wvar) > \lst > \cgst > \tsof(\rvar)$.
  So $\lnot(\wvar \leadsto \rvar)$.
\end{proof}
\end{appendix}
\end{document}